\newcommand{\set}[1]{\ensuremath{\left\{#1\right\}}}
\newcommand{\setst}[2]{\left\{#1 \mid #2\right\}}
\newcommand{\zo}{\set{0, 1}}
\newcommand{\eps}{\ensuremath{\varepsilon}}
\renewcommand{\phi}{\ensuremath{\varphi}}
\newcommand{\Dc}{\ensuremath{\mathcal{D}}}
\newcommand{\Fc}{\ensuremath{\mathcal{F}}}
\newcommand{\Rbb}{\ensuremath{\mathbb{R}}}
\newcommand{\Prob}[2]{\ensuremath{\mathrm{Pr}_{#1}\left[#2\right]}}
\newcommand{\Exp}[2]{\ensuremath{\mathrm{E}_{#1}\left[#2\right]}}
\newcommand{\abs}[1]{\ensuremath{\left|#1\right|}}
\newcommand{\norm}[1]{\ensuremath{\left\| #1 \right\|}}
\newcommand{\supp}{\mathrm{supp}\;}
\newcommand{\KS}{C}
\newcommand{\Xcomment}[1]{}
\begin{document}

\title{Common information revisited}
\author{Ilya Razenshteyn\thanks{Supported in part by NAFIT ANR-08-EMER-008-01,
 RFBR 09-01-00709, and RFBR 10-01-93109 grants}}
\institute{Moscow State Lomonosov University, Mathematics Department,\\
Logic and Algorithms Theory Division, \texttt{ilyaraz@gmail.com}}
\maketitle

\begin{abstract}
One of the main notions of information theory is the notion of mutual information in two messages
(two random variables in Shannon information theory or two binary strings in algorithmic information theory).
The mutual information in $x$ and $y$ measures how much the transmission of $x$
can be simplified if both the sender and the recipient know $y$ in advance. 
G\'acs and K\"orner gave an example where mutual information
cannot be presented as common information (a third message easily extractable from both $x$ and $y$).
Then this question was studied in the framework of algorithmic information theory by An. Muchnik and A. Romashchenko
who found many other examples of this type. K. Makarychev and Yu. Makarychev found a new proof of G\'acs--K\"orner results
by means of conditionally independent random variables. 
The question about the difference between mutual and common information can be studied quantitatively: for a given $x$ and
$y$ we look for three messages $a$, $b$, $c$ such that $a$ and $c$
are enough to reconstruct $x$, while $b$ and $c$ are enough to reconstruct $y$.

In this paper:
\begin{itemize}
    \item We state and prove (using hypercontractivity of product spaces)
    a quantitative version of G\'acs--K\"orner theorem;
    \item We study the tradeoff between $\abs{a}, \abs{b}, \abs{c}$ for a random pair $(x, y)$ such that Hamming distance
    between $x$ and $y$ is $\eps n$ (our bounds are almost tight);
    \item We construct ``the worst possible'' distribution on $(x, y)$ in terms of the tradeoff between
    $\abs{a}, \abs{b}, \abs{c}$.
\end{itemize}
\end{abstract}
 
\section{Introduction}
\label{sec:introduction}

Let us start with a specific communication problem.
Say we have two $n$-bit strings $x$ and $y$ such that Hamming distance between them is exactly $\eps n$.
We want to deliver $x$ and $y$ to the corresponding network 
nodes (Fig.~\ref{network-1.mps}).
\begin{figure}[h]
 \begin{center}
     \includegraphics[scale=1]{network-1.mps}
 \end{center}
  \caption{Network information transmission problem}
  \label{network-1.mps}
\end{figure}
For that we send a common message $c$ to both nodes, and separate messages $a$ and $b$
to each node. What tradeoff between the lengths of $a$, $b$ and $c$ is possible (if we are willing to be able to
transmit any such pair $(x, y)$)?
If we require $c$ to be empty, then, clearly, $\abs{a}, \abs{b} \geq n$. On the other hand,
if $a$ and $b$ are empty, then $\abs{c} \geq (1 + H(\eps) + o(1)) n$.
In this paper we study this tradeoff and obtain almost tight bounds for $\abs{a}$, $\abs{b}$ and $\abs{c}$.

We can consider more general situation. Suppose we have two dependent random variables $X$ and $Y$. We sample
$n$ independent copies of $(X, Y)$ (let us call them $(X^1, Y^1)$, $(X^2, Y^2)$, \ldots, $(X^n, Y^n)$) and
want to transmit $x := X^1X^2\ldots X^n$ and $y := Y^1 Y^2 \ldots Y^n$. We want our transmission to be successful with
probability close to $1$.
Again we are interested in the tradeoff between
$\abs{a}$, $\abs{b}$ and $\abs{c}$.
Information Theory gives three trivial bounds:
\begin{eqnarray}
    \label{it-1}
    \abs{a} + \abs{c} \geq (H(X) + o(1))n,\\
    \label{it-2}
    \abs{b} + \abs{c} \geq (H(Y) + o(1))n,\\
    \abs{a} + \abs{b} + \abs{c} \geq (H(X, Y) + o(1)) n.
\end{eqnarray}
This transmission problem was first considered in \cite{GW74}.
In the paper~\cite{GK73} G\'acs and K\"orner showed that the first two bounds could not be tight simultaneously (that is,
``common information is less than mutual information'') unless
the joint distribution of $X$ and $Y$ is in some sense degenerate. 
Konstantin Makarychev and Yury Makarychev presented in~\cite{MM05} another proof of the same theorem using
a notion of conditional
independence of random variables.

One can also pose a similar problem in the framework of Algorithmic Information Theory.
Suppose we have two strings $x, y \in \zo^n$ such that $C(x) \approx n$, $C(y) \approx n$ and
$I(x : y) \approx 0.5 n$ (here $C(\cdot)$ stands for plain Kolmogorov complexity and $I(\cdot : \cdot)$ ---
for algorithmic mutual information).
Does there exist a string $z$ such that
\begin{itemize}
    \item $C(z) \leq 0.51 n$,
    \item $C(x \mid z) \leq 0.51 n$,
    \item $C(y \mid z) \leq 0.51 n$?
\end{itemize}

Again, it turns out that the answer is negative in general (see \cite{M86}, \cite{M98}, \cite{R00}, \cite{CMRSV02}).
Information-theoretic and Kolmogorov-complexity-theoretic formulations are similar: $a$ stands for the shortest program,
which converts $z$ to $x$, $b$ stands for the shortest program, which converts $z$ to $y$, and $c$ stands for the shortest
program, which outputs $z$.

All these problems can be reformulated using a simple unifying combinatorial notion of the profile of a bipartite graph.
Suppose we have a bipartite graph $E \subseteq X \times Y$. Consider the following communication problem: transmit endpoints
of an edge of $E$.

\begin{definition}
We say that a triple $(\alpha, \beta, \gamma)$ belongs to
the profile of $E$ if there exist two mappings $f \colon \zo^{\alpha} \times \zo^{\gamma} \to X$,
$g \colon \zo^{\beta} \times \zo^{\gamma} \to Y$ such that for every $(x, y) \in E$ there exist $a \in \zo^{\alpha}$,
$b \in \zo^{\beta}$, $c \in \zo^{\gamma}$ such that $x = f(a, c)$ and $y = g(b, c)$.
\end{definition}
There is an obvious one-to-one correspondence between the profile of $E$ and the set of possible lengths of three messages
that we may transmit if we are willing to be able to deliver endpoints of any edge in $E$.
We can consider profiles not only for graphs, but also for distributions (we require the existence of $a, b, c$
with high probability).
If we consider the adjacency matrix $M$ of $E$, then the abovementioned question can be reformulated as follows:
for which triples $(\alpha, \beta, \gamma)$
it is possible to cover all ones in $M$ with $2^{\gamma}$ combinatorial rectangles of size
$2^{\alpha} \times 2^{\beta}$?

In this paper we study profiles of various graphs and distributions.
In Section \ref{sec:comb-int}
we relate the profile of $E$ and the maximum number of ones in the rectangles in $M$ of a given size.
In one direction the relation is obvious, but suprisingly it turns out that in some cases (namely, for edge-transitive graphs)
the existence of a rectangle
with many ones implies the existence of a good rectangle cover.
We also prove several simple bounds on the profile of a bipartite graph which will be useful later.
In Section \ref{sec:hypercontractivity}
we devise a pretty general tool to upper-bound number of ones in rectangles in $M$ of a given size.
It relies on hypercontractivity in product spaces (in \cite{AG76} hypercontractivity was used for a similar information
problem).
Using it we prove a quantitative version of G\'acs--K\"orner Theorem (for the case of uniform marginal distributions)
and improve upon \cite{GK73} and \cite{MM05}.
In Sections \ref{sec:fixed-distance} and \ref{sec:gacs-revisited}
we prove almost tight bounds for
the profile of our first example (strings with Hamming distance $\eps n$) using all these tools.
Then in Section \ref{sec:stoch-min}
we build an example of a graph with a minimal possible profile (unsurprisingly, we just prove that
random graphs do the job).
This example was announced in \cite{CMRSV02} without proof.

\section{Profile and the maximal number of edges in a rectangle}
\label{sec:comb-int}

In this section we make two observations. Let $E\subseteq X\times Y$ be a set of edges of 
a bipartite graph. For given $\alpha$ and $\beta$ consider combinatorial rectangles $X'\times Y'$ where
$X'\subseteq X$ has cardinality $2^\alpha$ and $Y'\subseteq Y$ has cardinality $2^\beta$. Let
$R_E(\alpha,\beta)$ be the maximum number of edges covered by rectangle of this type.

\begin{proposition}\label{prop:upper}
  If $R_E(\alpha,\beta)\cdot 2^\gamma < \abs{E}$, then the triple $(\alpha,\beta,\gamma)$
  does not belong to the profile of $\abs{E}$.
\end{proposition}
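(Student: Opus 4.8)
The plan is a direct rectangle-covering argument; no heavy machinery is needed. Assume for contradiction that $(\alpha,\beta,\gamma)$ \emph{does} belong to the profile of $E$, and fix witnessing mappings $f\colon\zo^\alpha\times\zo^\gamma\to X$ and $g\colon\zo^\beta\times\zo^\gamma\to Y$. The idea is to read off from $f$ and $g$ a cover of $E$ by $2^\gamma$ rectangles of size at most $2^\alpha\times 2^\beta$, and then count.

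First I would, for each fixed common message $c\in\zo^\gamma$, set $X_c:=\setst{f(a,c)}{a\in\zo^\alpha}$ and $Y_c:=\setst{g(b,c)}{b\in\zo^\beta}$, so that $\abs{X_c}\le 2^\alpha$, $\abs{Y_c}\le 2^\beta$, and $X_c\times Y_c$ is a combinatorial rectangle. Next I would check that these rectangles cover $E$: given any $(x,y)\in E$, the definition of the profile supplies $a,b,c$ with $x=f(a,c)$ and $y=g(b,c)$, hence $(x,y)\in X_c\times Y_c$. Thus $E$ is the union of the $2^\gamma$ sets $E\cap(X_c\times Y_c)$.

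Finally, each $X_c\times Y_c$ contributes at most $R_E(\alpha,\beta)$ edges of $E$: it has size at most $2^\alpha\times 2^\beta$, and a rectangle strictly smaller than that can always be enlarged to exactly $2^\alpha\times 2^\beta$ without dropping any of its edges, so the quantity $R_E(\alpha,\beta)$ --- defined for rectangles of size exactly $2^\alpha\times 2^\beta$ --- is a valid upper bound. Summing over the $2^\gamma$ choices of $c$ yields $\abs{E}\le 2^\gamma\cdot R_E(\alpha,\beta)$, contradicting the assumption $R_E(\alpha,\beta)\cdot 2^\gamma<\abs{E}$.

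There is really no obstacle here: this is the pigeonhole principle in disguise, and it is exactly the "obvious direction" alluded to in the introduction. The only step worth stating explicitly is the monotonicity remark in the last paragraph --- that $R_E(\alpha,\beta)$ also bounds the number of edges in rectangles of size \emph{below} $2^\alpha\times 2^\beta$ --- since without it the cover produced from a non-injective $f$ or $g$ would not be controlled by $R_E(\alpha,\beta)$ in a literal way.
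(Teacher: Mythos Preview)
Your argument is correct and is exactly the paper's own proof, spelled out in full: the paper simply says that each of the $2^{\gamma}$ rectangles covers at most $R_E(\alpha,\beta)$ edges, so they cannot cover all of $E$. Your explicit construction of $X_c,Y_c$ and the monotonicity remark are the natural unpacking of that one-line observation.
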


This observation is obvious: each of $2^\gamma$ rectangles covers at most $R_E(\alpha,\beta)$ edges,
so they cannot cover the entire graph.

It turns out that this observation can be reversed for the case of edge-transitive graphs.
Recall that a bipartite graph $E\subseteq X\times Y$ is \emph{edge-transitive} if the group of its 
automorphisms acts transitively on $E$ (for every two edges $e$ and $e'$ there is an
automoprhism that maps $e$ to $e'$; automorphisms are pairs of permutations $X\to X$ and
$Y\to Y$ that generate a permutation of $E$).

\begin{proposition}\label{prop:lower}
   Assume that $E$ is edge-transitive. If $R_E(\alpha,\beta)\cdot 2^\gamma \ge \abs{E}$, then
the triple $(\alpha,\beta,\gamma+\log\log\abs{E})$ belongs to the profile. 
\end{proposition}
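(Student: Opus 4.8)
The plan is to turn a single good rectangle into a full cover by a probabilistic argument, using edge-transitivity to make all edges "equally likely" to be caught. Fix a rectangle $R = X' \times Y'$ with $\abs{X'} = 2^\al$, $\abs{Y'} = 2^\be$ that covers $R_E(\al,\be)$ edges, so its edge-density inside $E$ is $\rho := R_E(\al,\be)/\abs{E} \ge 2^{-\gamma}$. Since $E$ is edge-transitive, for any fixed edge $e \in E$, a uniformly random automorphism $\sigma$ of $E$ sends a uniformly random edge to $e$; hence $\Prob{\sigma}{e \in \sigma(R)} = \rho \ge 2^{-\gamma}$ for every $e$. Now sample $N := \lceil \gamma \ln 2 \cdot \ln\abs{E}\rceil$ independent automorphisms $\sigma_1, \dots, \sigma_N$ and form the rectangles $\sigma_1(R), \dots, \sigma_N(R)$; note each $\sigma_i(R)$ is again a rectangle of the same dimensions, since automorphisms are pairs of permutations of $X$ and $Y$. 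For a fixed edge $e$, the probability it is missed by all $N$ rectangles is at most $(1-2^{-\gamma})^N \le \exp(-N 2^{-\gamma})$. Choosing $N$ as above and union-bounding over the $\abs{E}$ edges gives positive probability that every edge is covered, so some choice of $N$ automorphisms yields a cover of $E$ by $N$ rectangles of size $2^\al \times 2^\be$.

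It remains to check the bookkeeping turning "$N$ rectangles" into the profile bound $(\al, \be, \gamma + \log\log\abs{E})$. We need $N \le 2^{\gamma + \log\log\abs{E}} = 2^\gamma \log\abs{E}$; since $N = O(\gamma \ln\abs{E})$ and $2^\gamma \ge \gamma$ (so $2^\gamma \log\abs{E}$ dominates $\gamma\ln\abs{E}$ up to constants), this holds — though to land exactly on $\gamma + \log\log\abs{E}$ one should be slightly careful: one can afford one more rectangle than strictly needed, or absorb the constant by noting $2^{\gamma} \log\abs{E} \ge N$ whenever $\gamma \ge 1$, and handle $\gamma < 1$ (i.e. $2^\gamma \ge \abs{E}$, forcing the whole of $E$ into one rectangle) separately. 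Given a cover of $E$ by rectangles $R_1, \dots, R_N$ with $N \le 2^{\gamma+\log\log\abs{E}}$, one defines $f\colon \zo^\al \times \zo^{\gamma+\log\log\abs{E}} \to X$ by letting $c$ index the rectangle $R_i = X_i' \times Y_i'$ and $a$ index an element of $X_i'$ (padding $\zo^\al$ onto $X_i'$ by an arbitrary surjection-with-repetition, possible since $\abs{X_i'} = 2^\al$); similarly for $g$. Then for any $(x,y) \in E$, pick $i$ with $(x,y) \in R_i$, set $c = i$, and choose $a, b$ hitting $x, y$ respectively — exactly the definition of the profile.

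The main obstacle is purely the final counting step: the statement claims the clean bound $\gamma + \log\log\abs{E}$, and the union bound naturally produces $N \approx \gamma \cdot \ln\abs{E} \cdot \ln 2$ rectangles, which is $2^{\log\gamma + \log\log\abs{E} + O(1)}$ rather than $2^{\gamma + \log\log\abs{E}}$. So the inequality $\log\gamma + O(1) \le \gamma$ is what makes it work, and the only real care needed is the degenerate regime where $\gamma$ is tiny (here $2^\gamma \ge \abs{E}$ already forces $R_E(\al,\be) = \abs{E}$, i.e. $E$ itself is a rectangle, and the single rectangle $R$ — with $c$ a one-bit constant — does the job). Everything else — that $\sigma(R)$ stays a rectangle of the right size, that coordinates decompose correctly, that success probability is positive — is immediate from edge-transitivity and independence.
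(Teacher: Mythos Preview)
Your approach is exactly the paper's: pick a densest rectangle, apply random automorphisms, and use a union bound over edges to show some collection of shifted copies covers $E$. The structure is correct.

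There is, however, an arithmetic slip in your choice of $N$. You set $N = \lceil \gamma \ln 2 \cdot \ln\abs{E}\rceil$ and later repeat that $N = O(\gamma \ln\abs{E})$. But the union bound requires
\[
\abs{E}\cdot(1-2^{-\gamma})^N \le \abs{E}\cdot e^{-N\,2^{-\gamma}} < 1,
\]
i.e.\ $N > 2^{\gamma}\ln\abs{E}$, which is exponential in $\gamma$, not linear. Your $N$ is far too small, and the union bound does not go through with it. (Concretely, $\gamma\ln 2 < 2^{\gamma}$ for every $\gamma>0$, so $N\cdot 2^{-\gamma} < \ln\abs{E}$ and the bound fails.)

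Once you take the correct $N = \lceil 2^{\gamma}\ln\abs{E}\rceil$ (this is what the paper does, with $m = 2^{\gamma}\log\abs{E}$), the entire ``bookkeeping'' paragraph collapses to a single line: $N \le 2^{\gamma}\log\abs{E} = 2^{\gamma+\log\log\abs{E}}$, so you have at most $2^{\gamma+\log\log\abs{E}}$ rectangles and the profile claim follows immediately. There is no slack of the form $\log\gamma$ versus $\gamma$, no need to invoke $2^{\gamma}\ge\gamma$, and no separate ``tiny $\gamma$'' regime to worry about (your parenthetical ``$\gamma<1$, i.e.\ $2^{\gamma}\ge\abs{E}$'' is also garbled). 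The rest of your write-up --- that $\sigma(R)$ is again a $2^{\alpha}\times 2^{\beta}$ rectangle, and the explicit construction of $f,g$ from the cover --- is fine and matches the paper.
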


\begin{proof}
Let $R=X'\times Y'$ be a rectangle that contains many edges: let $K$ be the number of
edges in it, so that $K\cdot 2^\gamma\ge\abs{E}$. We will cover $E$ by shifted copies of $R$.
Consider $m$ independent randomly chosen automorphisms (all elements of the group are equiprobable);
let $R_1,\ldots,R_m$ be the images of $R$ under these automorphisms.

We want to show that $R_1,\ldots,R_m$ cover $E$ with positive probability. Indeed,
for a given edge $e$ the probability of being covered by one $R_i$ is $K/\abs{E}$
(the preimage of $e$ under the automoprhism is uniformly distributed in $E$ due to edge
transitivity). The probability of \emph{not} being covered is therefore $(1-K/\abs{E})$.
Different automorphisms are independent, so the probability of $e$ to avoid all $R_i$
is $(1-K/\abs{E})^m$. The probability that \emph{some} edge is not covered is bounded by
$\abs{E}\cdot (1-K/\abs{E})^m$. The assumption guarantees that $K/\abs{E}<2^{-\gamma}$, so $m=2^\gamma\log\abs{E}$
is enough to make this probability less than $1$. 
(Indeed, $(1-2^{-\gamma})^{2^\gamma}\approx 1/e<1/2$.) 

\end{proof}

In the typical application the values of $\alpha,\beta,\gamma$ are of the same order of
magnitude as $\log\abs{E}$, so $\log\log\abs{E}$ is small compared to $\alpha,\beta,\gamma$.

Now we state and prove several (trivial) bounds on the profile of $E$.

\begin{proposition}
\label{prop:trivial_bounds}
  Let $E \subseteq X \times Y$ be a bipartite graph without isolated nodes.
  Let $0 \leq \alpha \leq \log \abs{X}$ and $0 \leq \beta \leq \log \abs{Y}$.
  \begin{enumerate}
    \item
      If $\alpha + \gamma < \log \abs{X}$ or $\beta + \gamma < \log \abs{Y}$, then $\langle \alpha, \beta, \gamma\rangle$ does not belong to the profile of $E$.
    \item
      If $\alpha + \beta + \gamma < \log \abs{E}$, then $\langle \alpha, \beta, \gamma\rangle$ does not belong to the profile of $E$.
    \item
      If $\min(\alpha, \beta) + \gamma \geq \log \abs{E}$, then $\langle \alpha, \beta, \gamma\rangle$ belongs to the profile of $E$.
    \item
      If $\alpha + \beta + \gamma \geq \log \abs{X} + \log \abs{Y}$, then $\langle \alpha, \beta, \gamma\rangle$ belongs to the profile of $E$.
  \end{enumerate}
\end{proposition}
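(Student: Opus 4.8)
The plan is to handle the four parts separately. The two negative claims (1 and 2) come from extracting the rectangle‑cover structure out of a hypothetical witness $(f,g)$ and counting; the two positive claims (3 and 4) are explicit constructions of $f$ and $g$.

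For part~1, suppose $\langle\alpha,\beta,\gamma\rangle$ is in the profile, witnessed by $f\colon\zo^\alpha\times\zo^\gamma\to X$ and $g\colon\zo^\beta\times\zo^\gamma\to Y$. Since $E$ has no isolated vertices, every $x\in X$ is a left endpoint of some edge, hence $x=f(a,c)$ for suitable $a,c$; thus $f$ is surjective and $\abs{X}\le 2^{\alpha+\gamma}$. The contrapositive, together with the symmetric argument for $g$ and $Y$, is part~1. Part~2 is Proposition~\ref{prop:upper} with the trivial estimate $R_E(\alpha,\beta)\le 2^{\alpha+\beta}$: for each fixed $c$, the set $\set{f(a,c):a\in\zo^\alpha}\times\set{g(b,c):b\in\zo^\beta}$ is a rectangle of size at most $2^\alpha\times2^\beta$, and these $2^\gamma$ rectangles cover $E$, so $\abs{E}\le 2^{\alpha+\beta+\gamma}$.

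For part~3, assume without loss of generality $\alpha\le\beta$, so $\min(\alpha,\beta)=\alpha$ and the hypothesis reads $\abs{E}\le 2^{\alpha+\gamma}$. List the edges of $E$ and split them into $2^\gamma$ buckets, one for each $c\in\zo^\gamma$, each of size at most $\lceil\abs{E}/2^\gamma\rceil\le 2^\alpha$ (here $2^\alpha$ is an integer, so the ceiling does not hurt). Enumerating the edges of bucket $c$ as $(x^c_1,y^c_1),\dots,(x^c_{k_c},y^c_{k_c})$ with $k_c\le 2^\alpha\le 2^\beta$, put $f(a,c)=x^c_a$ and $g(b,c)=y^c_b$ on the indices that occur (and any fixed values otherwise). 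Every edge is the $a$‑th edge of some bucket $c$, whence $x=f(a,c)$ and $y=g(a,c)$; so $\langle\alpha,\beta,\gamma\rangle$ is in the profile.

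For part~4 the idea is to tile the whole box $X\times Y$, which contains $E$: partition $X$ into $\lceil\abs{X}/2^\alpha\rceil$ blocks of size at most $2^\alpha$ and $Y$ into $\lceil\abs{Y}/2^\beta\rceil$ blocks of size at most $2^\beta$, and take all product blocks. This covers $X\times Y$ by $\lceil\abs{X}/2^\alpha\rceil\cdot\lceil\abs{Y}/2^\beta\rceil$ rectangles of size $2^\alpha\times2^\beta$, which can be indexed by $\zo^\gamma$ provided this number is at most $2^\gamma$. When $\abs{X}$ and $\abs{Y}$ are powers of two the count is exactly $2^{\log\abs{X}+\log\abs{Y}-\alpha-\beta}\le 2^\gamma$ by hypothesis (the assumptions $\alpha\le\log\abs{X}$ and $\beta\le\log\abs{Y}$ guarantee this exponent is a non‑negative integer), and in general the ceilings cost only lower‑order additive terms in $\gamma$. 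This rounding is the only delicate point in the whole proposition; the rest is bookkeeping.
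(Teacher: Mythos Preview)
Your proof is correct and matches the paper's own argument part for part: (1) surjectivity of $f$ (resp.\ $g$) onto $X$ (resp.\ $Y$), (2) counting cells in $2^\gamma$ rectangles of area $2^{\alpha+\beta}$, (3) bucketing the edge list into $2^\gamma$ groups of size at most $2^{\min(\alpha,\beta)}$, and (4) tiling all of $X\times Y$. You actually give more detail than the paper does, and the rounding caveat you flag in part~4 is real---the paper silently assumes $\abs{X},\abs{Y}$ are powers of two (which is the only case used later), so your remark that in general one only gets the bound up to lower-order additive terms is the honest statement.
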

\begin{proof}
  \begin{enumerate}
    \item
      If $\alpha + \gamma < \log \abs{X}$, then we are unable to cover $X$ (here we use that there are no isolated nodes in $E$).
      The second case is similar.
    \item
      If $\alpha + \beta + \gamma < \log \abs{E}$, then it is obviously impossible to cover all edges.
    \item
      Using one $2^{\alpha} \times 2^{\beta}$ rectangle we can cover \emph{any} $2^{\min(\alpha, \beta)}$ 1's. Thus, we can cover $E$ with $\abs{E} / 2^{\min(\alpha, \beta)}$ rectangles.
    \item
      If $\alpha + \beta + \gamma \geq \log \abs{X} + \log \abs{Y}$, then we can cover not only $E$, but the entire
      $X \times Y$.
  \end{enumerate}
\end{proof}

\section{Upper-bounding $R_E(\alpha, \beta)$}
\label{sec:hypercontractivity}

To apply Proposition~\ref{prop:upper} to regular graphs, we need a technique to upper-bound $R_E(\alpha, \beta)$.
Let us consider slightly more general situation. Instead of a regular graph $E \subseteq X \times Y$
let us consider a distribution $\Dc$ over $X \times Y$ such that its marginal distributions are uniform.
Then the natural generalization of $R_E(\alpha, \beta) / \abs{E}$ is the following quantity:
$$
    R_{\Dc}(\alpha, \beta) := \max_{\begin{smallmatrix}\abs{A} \leq 2^{\alpha}\\\abs{B} \leq 2^{\beta}\end{smallmatrix}}
    \Prob{(x, y) \sim \Dc}{x \in A, y \in B}.
$$

Now let us generalize our problem even more. Suppose that $\Dc$'s marginal distributions are not necessarily uniform.
Let us denote them by $\Dc_X$ and $\Dc_Y$. Suppose we have two sets $A \subseteq X$, $B \subseteq Y$ such that
$\Prob{x \sim \Dc_X}{x \in A} \leq \mu$, $\Prob{y \sim \Dc_Y}{y \in B} \leq \nu$. We are interested in upper bounds on
$\Prob{(x, y) \sim \Dc}{x \in A, y \in B}$. There is an obvious bound $\min\set{\mu, \nu}$ for this quantity, but as
we will see if $\Dc$ is in some sense non-degenerate, then we can sharpen this bound.
From now on we assume that $\supp \; \Dc_X = X$ and $\supp \; \Dc_Y = Y$ (otherwise, we can reduce either $X$ or $Y$).

Let us call $\Dc$ \emph{non-degenerate} if $\supp \Dc$ is a connected bipartite graph on $(X, Y)$ and \emph{degenerate}
otherwise. 

To formulate the upper bound, we shall introduce a parameter $\delta(\Dc)$ with the following properties:
\begin{itemize}
    \item $\delta(\Dc) > 0$ iff $\Dc$ is non-degenerate (Theorem~\ref{delta_non_degenerate} below);
    \item $\delta(\Dc_1 \otimes \Dc_2) \geq \min \set{\delta(\Dc_1), \delta(\Dc_2)}$ (Theorem~\ref{tensor-product} below),
    here $\Dc_1 \otimes \Dc_2$ is a product distribution of $\Dc_1$ and $\Dc_2$;
    \item if $\delta(\Dc) > 0$, then there is a non-trivial upper bound on $\Prob{(x, y) \sim \Dc}{x \in A, y \in B}$
    (Theorem~\ref{hypercontractivity_rectangle_general} below). 
\end{itemize}

If $\Dc$'s matrix was symmetric, then one could in principle define $\delta(\Dc) := 1 - \lambda(\Dc)$,
where $\lambda(\Dc)$ is the second largest eigenvalue of $\Dc$'s matrix (the largest eigenvalue is clearly $1$).
Then, obviously, all three desired properties are true
(the third one is a corollary of Expander Mixing Lemma \cite{HLW06}). The problem is that Expander Mixing Lemma is too weak for our
purposes (especially if $\mu, \nu = o(1)$), so we need something stronger.

Now let us define $\delta(\Dc)$. Consider $\Fc_X$, $\Fc_Y$ --- linear spaces of $\Rbb$-valued functions on $X$ and $Y$
respectively. Consider the following linear operator $T_{\Dc} \colon \Fc_Y \to \Fc_X$:
$$
    (T_{\Dc}f)(x) := \Exp{(x', y) \sim \Dc}{f(y) \mid x = x'}.
$$
Let us consider standard $L_p$-norms on $\Fc_X$ and $\Fc_Y$: $\norm{f}_{p} := \Exp{}{\abs{f}^p}^{1/p}$ (here expectation is
taken over $X$ or $Y$ with respect to $\Dc_X$, $\Dc_Y$ respectively). It is easy to check
that $T_{\Dc}$ is an $L_p$-contraction for all $1 \leq p \leq \infty$.
\begin{lemma}
    \label{lp-contraction}
    $\norm{T_{\Dc} f}_p \leq \norm{f}_p$ for all $1 \leq p \leq \infty$ and $f \in \Fc_Y$.
\end{lemma}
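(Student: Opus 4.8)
The plan is to reduce the two extreme cases $p=1$ and $p=\infty$ to elementary facts and then interpolate, or alternatively to give a direct convexity argument that works uniformly in $p$. Let me describe the direct route, which I expect to be cleanest. Fix $f \in \Fc_Y$ and write $\rho(x) := \Exp{(x',y)\sim\Dc}{1 \mid x=x'}$ — that is, the conditional distribution of $y$ given $x=x'$ — so that $(T_\Dc f)(x')$ is an average of the values $f(y)$ against the probability measure $\mu_{x'}(\cdot) := \Prob{(x'',y)\sim\Dc}{y = \cdot \mid x'' = x'}$ on $Y$. The key structural observation is the \emph{reversibility-free} identity that ties the two marginals together: for any function $h$ on $Y$,
$$
\Exp{x\sim\Dc_X}{(T_\Dc h)(x)} = \Exp{(x,y)\sim\Dc}{h(y)} = \Exp{y\sim\Dc_Y}{h(y)},
$$
i.e. $T_\Dc$ preserves expectation. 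This is the only ``global'' fact about $\Dc$ I will need.

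First I would handle $p=\infty$: since $(T_\Dc f)(x)$ is a convex combination (an average under a probability measure) of values of $f$, we immediately get $|(T_\Dc f)(x)| \le \max_y |f(y)| = \norm{f}_\infty$ pointwise, hence $\norm{T_\Dc f}_\infty \le \norm{f}_\infty$. Next, for general $1 \le p < \infty$, apply Jensen's inequality to the convex function $t \mapsto |t|^p$ against the probability measure $\mu_x$:
$$
|(T_\Dc f)(x)|^p = \left| \Exp{y \sim \mu_x}{f(y)} \right|^p \le \Exp{y \sim \mu_x}{|f(y)|^p} = \bigl(T_\Dc(|f|^p)\bigr)(x).
$$
Now take expectation over $x \sim \Dc_X$ and use the expectation-preservation identity above with $h = |f|^p$:
$$
\norm{T_\Dc f}_p^p = \Exp{x\sim\Dc_X}{|(T_\Dc f)(x)|^p} \le \Exp{x\sim\Dc_X}{\bigl(T_\Dc(|f|^p)\bigr)(x)} = \Exp{y\sim\Dc_Y}{|f(y)|^p} = \norm{f}_p^p.
$$
Taking $p$-th roots gives $\norm{T_\Dc f}_p \le \norm{f}_p$, and the $p=\infty$ case was done separately (or is recovered by letting $p \to \infty$).

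The argument is essentially routine, so there is no serious obstacle; the only point requiring a moment's care is making sure the expectation-preservation identity is applied to the right function ($|f|^p$, not $f$) and that the conditional expectation operator $T_\Dc$ is genuinely an averaging operator against a \emph{probability} measure on $Y$ for every $x$ — this uses that $\Dc_X$ is the exact $X$-marginal of $\Dc$ and that $\supp\Dc_X = X$, so $\mu_x$ is well-defined for every $x \in X$. Once those book-keeping points are in place, Jensen plus the identity closes the proof for all $1 \le p \le \infty$ simultaneously.
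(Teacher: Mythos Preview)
Your proof is correct. The paper does not actually prove this lemma --- it only states ``It is easy to check that $T_{\Dc}$ is an $L_p$-contraction'' --- and your argument (Jensen against the conditional measure $\mu_x$, followed by the tower identity $\Exp{x\sim\Dc_X}{T_\Dc h}=\Exp{y\sim\Dc_Y}{h}$) is precisely the standard verification the paper is alluding to.
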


$\delta(\Dc)$ characterizes to what extent Lemma~\ref{lp-contraction} can be sharpened for a particular $\Dc$
near $p = 2$.
\begin{definition}
    $$
        \delta(\Dc) := \max \setst{\delta \leq 1}
            {\forall f \in \Fc_Y \; \norm{T_{\Dc} f}_{2 + \frac{\delta}{1 - \delta}} \leq \norm{f}_{2 - \delta}}
    $$
\end{definition}
This definition makes sense because the following Lemma holds.
\begin{lemma}
    \label{norm-monotonicity}
    If $1 \leq p \leq q \leq \infty$, then
    $$
        \norm{f}_p \leq \norm{f}_q,
    $$
    there is an equality iff $f$ is constant.
\end{lemma}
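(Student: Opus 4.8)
The plan is to deduce the monotonicity from Jensen's inequality (equivalently, from the power‑mean inequality, or from H\"older's inequality), using crucially that the measures $\Dc_X$, $\Dc_Y$ with respect to which the norms are taken are \emph{probability} measures, so that $\Exp{}{1} = 1$. Concretely, I would first dispose of the easy boundary case $q = \infty$ and then treat $1 \le p \le q < \infty$ by raising both sides to a common power and invoking convexity of an appropriate power function.

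First, for $q = \infty$: since $\abs{f} \le \norm{f}_\infty$ pointwise (the support is all of $X$, resp.\ $Y$), we get $\Exp{}{\abs{f}^p} \le \norm{f}_\infty^p$, i.e.\ $\norm{f}_p \le \norm{f}_\infty$. Next, for $1 \le p \le q < \infty$, set $r := q/p \ge 1$; the function $t \mapsto t^r$ is convex on $[0,\infty)$, so Jensen's inequality applied to the random variable $\abs{f}^p$ yields $\bigl(\Exp{}{\abs{f}^p}\bigr)^{q/p} \le \Exp{}{\abs{f}^q}$. Raising both sides to the power $1/q$ gives exactly $\norm{f}_p \le \norm{f}_q$. (Alternatively, write $\abs{f}^p = \abs{f}^p \cdot 1$ and apply H\"older with exponents $q/p$ and $q/(q-p)$, using $\Exp{}{1} = 1$; this gives the same bound.)

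For the equality case, when $p < q$ and $r = q/p > 1$ the power function $t \mapsto t^r$ is strictly convex, so equality in Jensen forces $\abs{f}^p$ to be almost surely constant; since we are on a finite probability space of full support, this means $\abs{f}$ is literally constant. In the case $q = \infty$, equality forces $\abs{f} = \norm{f}_\infty$ everywhere, again giving $\abs{f}$ constant; and the case $p = q$ is trivial. Conversely, a constant $f$ obviously has all $L_p$‑norms equal. Strictly speaking the equality condition one extracts is that $\abs{f}$ is constant, which coincides with $f$ being constant in all the applications we make of the lemma (where $f$ is a nonnegative function); I would state the lemma with this understanding.

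There is no substantial obstacle here — this is a classical fact. The only points requiring a little care are handling the endpoint $q = \infty$ uniformly with the finite case and, for the equality characterization, correctly tracing strict convexity of $t^{q/p}$ back to constancy of $\abs{f}$ (rather than of $f$ itself).
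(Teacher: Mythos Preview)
The paper states this lemma without proof, treating it as a standard fact; so there is no ``paper's own proof'' to compare against. Your argument via Jensen's inequality (or equivalently H\"older with the constant function $1$) is the textbook proof and is correct.

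Your remark on the equality case is well taken: what Jensen actually yields is that $\abs{f}$ is constant, not that $f$ is constant, so the lemma as stated is slightly imprecise. In the paper's only use of the equality clause (the proof of Theorem~\ref{delta_non_degenerate}, where strict monotonicity of $p\mapsto\norm{I_{Y_1}}_p$ is invoked), the function in question is an indicator and hence nonnegative, so the distinction is immaterial there. Your proposal handles this honestly.
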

Now let us restate the properties of $\delta(\Dc)$.
\begin{theorem}
    \label{delta_non_degenerate}
    $\delta(\Dc) > 0$ iff $\Dc$ is non-degenerate.
\end{theorem}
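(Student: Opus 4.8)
The plan is to prove both directions of the equivalence $\delta(\Dc) > 0 \iff \Dc$ non-degenerate.

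\medskip

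\emph{Easy direction: degenerate $\Rightarrow \delta(\Dc) = 0$.} Suppose $\supp \Dc$ is disconnected, so $X = X_1 \sqcup X_2$ and $Y = Y_1 \sqcup Y_2$ with all $\Dc$-mass on $(X_1 \times Y_1) \cup (X_2 \times Y_2)$. Take $f$ to be the indicator of $Y_1$ (rescaled so that it is not constant but has small support, say). Then $T_{\Dc} f$ is the indicator of $X_1$. For any $\delta > 0$, writing $p = 2 - \delta < 2 < q = 2 + \frac{\delta}{1-\delta}$, I would compute $\norm{T_{\Dc}f}_q = \Prob{}{X_1}^{1/q}$ and $\norm{f}_p = \Prob{}{Y_1}^{1/p} = \Prob{}{X_1}^{1/p}$ (the two block-probabilities agree since each block has mass concentrated inside it). Since $1/q < 1/p$ and $\Prob{}{X_1} < 1$ when $f$ is chosen supported on a proper sub-block (or, if $X$ itself is only two blocks, by taking a further multiple of the indicator against a constant), the inequality $\norm{T_\Dc f}_q \le \norm{f}_p$ fails. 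Hence no $\delta > 0$ works and $\delta(\Dc) = 0$. A small amount of care is needed to handle the case where there are exactly two blocks of equal mass; there one perturbs $f = \mathbf{1}_{Y_1} - \mathbf{1}_{Y_2}$ slightly, or notes $\|f\|_p$ is increasing in $p$ strictly for non-constant $f$ (Lemma~\ref{norm-monotonicity}) so the $(2-\delta)$-norm is strictly below the $2$-norm while the image keeps its full $2$-norm.

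\medskip

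\emph{Hard direction: non-degenerate $\Rightarrow \delta(\Dc) > 0$.} This is the real content. I would argue by contradiction and compactness. The condition ``$\delta$ works'' is $\sup_{f \ne 0} \|T_\Dc f\|_{q(\delta)} / \|f\|_{p(\delta)} \le 1$ where $p(\delta) = 2 - \delta$, $q(\delta) = 2 + \frac{\delta}{1-\delta}$; note $p(0) = q(0) = 2$, and by Lemma~\ref{lp-contraction} the operator norm of $T_\Dc$ from $L_2 \to L_2$ is $\le 1$. First I would record that on the finite-dimensional space $\Fc_Y$ the map $\delta \mapsto \|T_\Dc f\|_{q(\delta)}/\|f\|_{p(\delta)}$ is continuous in $\delta$ uniformly over $f$ on the unit sphere; so it suffices to show the $L_2 \to L_2$ norm of $T_\Dc$ is \emph{strictly} less than $1$ \emph{on the subspace of functions orthogonal to constants}, and then bootstrap. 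Concretely: decompose $\Fc_Y = \Rbb \cdot \mathbf{1} \oplus \Fc_Y^0$ where $\Fc_Y^0$ is the $\Dc_Y$-mean-zero functions. Non-degeneracy (connectedness of $\supp \Dc$) implies that the only $f$ with $\|T_\Dc f\|_2 = \|f\|_2$ are constants — this is the Perron–Frobenius / irreducibility fact for the ``Markov'' operator $T_\Dc$ (whose adjoint is the analogous operator in the other direction, and $T_\Dc^* T_\Dc$ is a self-adjoint stochastic-type operator on $\Fc_Y$ that is irreducible iff $\supp \Dc$ is connected), so the second-largest singular value $\lambda := \|T_\Dc|_{\Fc_Y^0 \to \Fc_X}\|_2 < 1$.

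\medskip

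The main obstacle — and the place to spend effort — is turning ``$\lambda < 1$ in $L_2$'' into ``$\delta(\Dc) > 0$'' i.e.\ the hypercontractive gain $\|T_\Dc f\|_{2+\eps'} \le \|f\|_{2-\eps}$ for small $\eps$. For a general $f = c + g$ with $g \in \Fc_Y^0$, I would estimate $\|T_\Dc f\|_{2+\eps'}^{2+\eps'}$ by a Taylor/perturbation expansion around $\eps' = 0$: $\|h\|_{2+\eps'}^{2+\eps'} = \|h\|_2^2 (1 + \eps' \cdot \tfrac12 \Exp{}{\tilde h^2 \ln \tilde h^2} + O(\eps'^2))$ where $\tilde h = h/\|h\|_2$, with an analogous expansion for $\|f\|_{2-\eps}$; the logarithmic-Sobolev-type second-order terms are bounded because everything lives on a finite space (all relevant constants, e.g.\ $\min$ nonzero entry of $\Dc$, are strictly positive). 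The linear-in-$\eps$ term for the domain is \emph{negative} (norm decreasing as $p$ decreases, Lemma~\ref{norm-monotonicity}), while the spectral gap $1 - \lambda^2$ gives a fixed multiplicative loss $\|T_\Dc f\|_2^2 \le c^2 + \lambda^2 \|g\|_2^2 \le \|f\|_2^2 - (1-\lambda^2)\|g\|_2^2$ — so for $\eps, \eps'$ small enough (depending only on $\lambda$ and the finite data of $\Dc$) the product of the gain factors stays $\le 1$. One must separately check the degenerate-looking extreme where $g$ is tiny compared to $c$ (there $f$ is nearly constant, $\|T_\Dc f\|_q \approx |c| \approx \|f\|_p$ up to the strictly-favorable $p < q$ ordering on the near-constant, handled again by Lemma~\ref{norm-monotonicity}) and where $c$ is tiny compared to $g$ (pure spectral-gap regime). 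Patching these two regimes uniformly is the delicate step; a clean way is to homogenize and treat $t := |c|^2 / \|f\|_2^2 \in [0,1]$ as the single parameter and show the relevant function of $(t, \eps, \eps')$ is $\le 1$ on a neighborhood of $\eps = \eps' = 0$ uniformly in $t$, using continuity and the two boundary estimates at $t = 0$ and $t = 1$.
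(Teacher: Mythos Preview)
Your easy direction is the paper's: with $f=I_{Y_1}$ one has $T_\Dc f=I_{X_1}$ and $\Prob{}{X_1}=\Prob{}{Y_1}=:\mu\in(0,1)$ (all $\Dc$-mass lies in the diagonal blocks), whence $\norm{T_\Dc f}_q=\mu^{1/q}>\mu^{1/p}=\norm{f}_p$ for every $q>p$. Your worry about ``two blocks of equal mass'' is unnecessary---the computation only needs $0<\mu<1$, which holds for any nontrivial partition.

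For the hard direction the paper takes a much lighter route than your spectral-gap/Taylor-expansion program. It restricts to the (compact, finite-dimensional) unit $L_2$-sphere $S\subset\Fc_Y$, sets $\delta(f):=\max\{\delta\le 1:\norm{T_\Dc f}_{q(\delta)}\le\norm{f}_{p(\delta)}\}$ for each $f\in S$, notes that $\delta(\cdot)$ is continuous and $\delta(\Dc)=\inf_{f\in S}\delta(f)$, and then checks $\delta(f)>0$ \emph{pointwise}: for constant $f$ trivially $\delta(f)=1$; for non-constant $f$, connectedness gives the strict inequality $\norm{T_\Dc f}_2<\norm{f}_2$ (your Perron--Frobenius observation), and continuity in the exponent then yields some $\eta>0$ with $\norm{T_\Dc f}_{2+\eta}<\norm{f}_{2-\eta}$. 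No decomposition $f=c+g$, no expansion in $\eps$, no uniform-in-$t$ patching is needed. Your approach would buy a quantitative lower bound on $\delta(\Dc)$ in terms of the spectral gap and the minimum atom of $\Dc$, but at the cost of the ``delicate step'' you flag and do not actually carry out (it is essentially a log-Sobolev inequality on the finite space). For mere positivity of $\delta(\Dc)$, the paper's compactness trick is the idea you are missing.
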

\begin{theorem}
    \label{tensor-product}
    $\delta(\Dc_1 \otimes \Dc_2) \geq \min \set{\delta(\Dc_1), \delta(\Dc_2)}$.
\end{theorem}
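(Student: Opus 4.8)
The plan is to deduce Theorem~\ref{tensor-product} from the classical fact that the tensor product of two hypercontractive operators is hypercontractive with the same pair of exponents. Two preliminary remarks set this up. First, the set of $\delta$ for which the defining inequality $\norm{T_{\Dc}f}_{2+\delta/(1-\delta)} \leq \norm{f}_{2-\delta}$ holds is downward closed: if it holds for $\delta$ and $\delta' \leq \delta$, then $2+\delta'/(1-\delta') \leq 2+\delta/(1-\delta)$ and $2-\delta' \geq 2-\delta$, so Lemma~\ref{norm-monotonicity} gives $\norm{T_{\Dc}f}_{2+\delta'/(1-\delta')} \leq \norm{T_{\Dc}f}_{2+\delta/(1-\delta)} \leq \norm{f}_{2-\delta} \leq \norm{f}_{2-\delta'}$ (and $\delta=0$ is always admissible by Lemma~\ref{lp-contraction}). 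Thus, setting $\delta := \min\set{\delta(\Dc_1),\delta(\Dc_2)}$, $p := 2-\delta$ and $q := 2+\frac{\delta}{1-\delta}$, both $T_{\Dc_1}$ and $T_{\Dc_2}$ satisfy $\norm{T_{\Dc_i}g}_q \leq \norm{g}_p$ in their respective spaces, with all $L_r$-norms taken against the corresponding marginals, and $p \leq 2 \leq q$. Second, unwinding the definition of $T_{\Dc}$ shows $T_{\Dc_1\otimes\Dc_2} = T_{\Dc_1}\otimes T_{\Dc_2}$ as an operator $\Fc_{Y_1\times Y_2}\to\Fc_{X_1\times X_2}$, the two factors acting on the two coordinates (and commuting), with norms taken against $\Dc_{1,X}\otimes\Dc_{2,X}$ and $\Dc_{1,Y}\otimes\Dc_{2,Y}$ (both of full support). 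So it is enough to show: if $S_1,S_2$ are operators with $\norm{S_i g}_q \leq \norm{g}_p$ and $p \leq q$, then $\norm{(S_1\otimes S_2)F}_q \leq \norm{F}_p$ for every $F$; this is exactly the defining inequality for $T_{\Dc_1\otimes\Dc_2}$ with parameter $\delta$, giving $\delta(\Dc_1\otimes\Dc_2)\geq\delta$.

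To prove the tensorization I would peel off one factor at a time. Writing $G := (I\otimes S_2)F$, so that $(S_1\otimes S_2)F = (S_1\otimes I)G$, I freeze the second coordinate, apply the bound for $S_1$ in the first coordinate, and use Fubini to get that $\norm{(S_1\otimes S_2)F}_q$ is at most the iterated norm \textit{``$L_q$ in the second coordinate of $L_p$ in the first coordinate''} of $G$. Since $q\geq p$, Minkowski's integral inequality interchanges the two coordinates, bounding this by \textit{``$L_p$ in the first coordinate of $L_q$ in the second coordinate''} of $G$. Finally, with the first coordinate frozen $G$ is just $S_2$ applied to $F$ in the second coordinate, so $\norm{G(u,\cdot)}_q \leq \norm{F(u,\cdot)}_p$ pointwise in $u$, and one more application of Fubini gives $\norm{(S_1\otimes S_2)F}_q \leq \norm{F}_p$.

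The one load-bearing step is the middle one: the passage from the ``$L_q$-of-$L_p$'' to the ``$L_p$-of-$L_q$'' iterated norm via Minkowski's integral inequality, which goes in the direction we need precisely because $p \leq q$ --- that is, because the hypercontractive inequality trades an exponent below $2$ for one above $2$. Without this hypothesis the tensor-product statement is false, so this is where all the content sits. The rest is routine; the only thing to keep straight is that on each side of the bipartite picture the relevant $L_r$-norm is taken against the appropriate product marginal, and since the definition of $\delta(\Dc)$ never assumed uniform marginals this causes no trouble.
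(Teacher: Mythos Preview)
The paper does not prove Theorem~\ref{tensor-product} at all: it simply states ``For the proof of Theorem~\ref{tensor-product} see \cite{GS11}.'' Your proposal therefore goes beyond what the paper provides.

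Your argument is correct and is the standard tensorization proof for hypercontractivity (and is almost certainly what the cited reference contains). The three ingredients---downward closure of admissible $\delta$ via Lemma~\ref{norm-monotonicity}, the factorization $T_{\Dc_1\otimes\Dc_2}=T_{\Dc_1}\otimes T_{\Dc_2}$, and the peel-off-one-factor argument with Minkowski's integral inequality in the middle---are exactly right, and you have correctly identified that Minkowski goes in the needed direction precisely because $p\le q$. One tiny cosmetic point: when you ``freeze the second coordinate'' to apply $S_1$, that coordinate lives in $X_2$ (since $G:Y_1\times X_2\to\Rbb$), so your iterated norms are $L_q(X_2;L_p(Y_1))$ and then $L_p(Y_1;L_q(X_2))$; your prose is consistent with this, but it is worth making the domains explicit since $S_1$ and $S_2$ change the underlying space from $Y_i$ to $X_i$.
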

\begin{theorem}
    \label{hypercontractivity_rectangle_general}
    Let $A \subseteq X$, $B \subseteq Y$. If $\Prob{x \sim \Dc_X}{x \in A} \leq \mu$ and
    $\Prob{y \sim \Dc_Y}{y \in B} \leq \nu$, then
    \begin{multline*}
        \Prob{(x, y) \sim \Dc}{x \in A, y \in B} \leq \mu \nu + \\ +
        \left(\mu^{2 - \delta(\Dc)}(1 - \mu) + \mu (1 - \mu)^{2 - \delta(\Dc)}\right)^{1 / (2 - \delta(\Dc))}.
        \left(\nu^{2 - \delta(\Dc)}(1 - \nu) + \nu (1 - \nu)^{2 - \delta(\Dc)}\right)^{1 / (2 - \delta(\Dc))}.
    \end{multline*}
\end{theorem}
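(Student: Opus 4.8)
The plan is to turn the probability $\Prob{(x,y)\sim\Dc}{x\in A,\,y\in B}$ into a bilinear form built from $T_{\Dc}$, subtract off the ``independent'' main term $\mu\nu$ exactly, and only then feed what is left into the hypercontractive inequality that defines $\delta(\Dc)$. First I would record the identity $\Exp{(x,y)\sim\Dc}{f(x)g(y)}=\dotp{f, T_{\Dc}g}$, where $\dotp{\cdot,\cdot}$ is the inner product with respect to $\Dc_X$; this holds because $(T_{\Dc}g)(x)$ is precisely the conditional expectation of $g(y)$ given $x$. Taking $f=\mathbf{1}_A$ and $g=\mathbf{1}_B$ gives
\[
    \Prob{(x,y)\sim\Dc}{x\in A,\,y\in B}=\dotp{\mathbf{1}_A, T_{\Dc}\mathbf{1}_B}.
\]
Two bookkeeping observations: the two exponents in the definition of $\delta(\Dc)$ are H\"older conjugate, $\tfrac{1}{2-\delta}+\tfrac{1}{2+\delta/(1-\delta)}=1$; and by enlarging $A$ and $B$ (which only increases the left-hand side, together with a mild monotonicity fact noted at the end) one may assume $\Prob{x\sim\Dc_X}{x\in A}=\mu$ and $\Prob{y\sim\Dc_Y}{y\in B}=\nu$.

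Next I would \emph{center}: put $\mathbf{1}_A=\mu+\bar f$ and $\mathbf{1}_B=\nu+\bar g$, so $\Exp{\Dc_X}{\bar f}=\Exp{\Dc_Y}{\bar g}=0$. Since $T_{\Dc}$ fixes the constant $1$ and preserves means, $\Exp{\Dc_X}{T_{\Dc}h}=\Exp{\Dc_Y}{h}$ (because the marginals of $\Dc$ are $\Dc_X$ and $\Dc_Y$), expanding the bilinear form annihilates all three cross terms:
\[
    \dotp{\mathbf{1}_A, T_{\Dc}\mathbf{1}_B}=\mu\nu+\mu\,\Exp{\Dc_X}{T_{\Dc}\bar g}+\nu\,\Exp{\Dc_X}{\bar f}+\dotp{\bar f, T_{\Dc}\bar g}=\mu\nu+\dotp{\bar f, T_{\Dc}\bar g}.
\]
Then I would bound the remaining term by H\"older's inequality against the conjugate pair $(2-\delta,\,2+\delta/(1-\delta))$ and immediately apply the definition of $\delta=\delta(\Dc)$:
\[
    \dotp{\bar f, T_{\Dc}\bar g}\le\norm{\bar f}_{2-\delta}\cdot\norm{T_{\Dc}\bar g}_{2+\delta/(1-\delta)}\le\norm{\bar f}_{2-\delta}\cdot\norm{\bar g}_{2-\delta}.
\]
Finally, $\bar f$ is the two-valued function equal to $1-\mu$ with $\Dc_X$-probability $\mu$ and to $-\mu$ with probability $1-\mu$, so $\norm{\bar f}_{2-\delta}^{2-\delta}=\mu(1-\mu)^{2-\delta}+(1-\mu)\mu^{2-\delta}$, and likewise for $\bar g$ with $\nu$; substituting reproduces the claimed bound term by term.

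The hard part is conceptual rather than computational: one must apply hypercontractivity to the mean-zero parts $\bar f,\bar g$ and not to $\mathbf{1}_A,\mathbf{1}_B$ directly. A naive application would only yield $\Prob{(x,y)\sim\Dc}{x\in A,\,y\in B}\le(\mu\nu)^{1/(2-\delta)}$, which is far too weak for small $\mu,\nu$; centering instead isolates the exact main term $\mu\nu$ and leaves a genuinely smaller correction. The other ingredients are routine: the vanishing of the cross terms is just the statement that $T_{\Dc}$ is a mean-preserving Markov-type operator, and computing the $L^{2-\delta}$-norm of a two-point function is elementary. The one point needing a moment's care is the reduction to $\Prob{x\in A}=\mu$, $\Prob{y\in B}=\nu$, which uses that $t\mapsto t(1-t)^{2-\delta}+(1-t)t^{2-\delta}$ is nondecreasing on $[0,\tfrac12]$ (the case $\mu$ or $\nu>\tfrac12$ being uninteresting).
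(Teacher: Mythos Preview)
Your argument is correct and follows the paper's proof essentially line for line: center the indicators, observe that the cross terms vanish, apply H\"older with the conjugate pair $\bigl(2-\delta,\ 2+\delta/(1-\delta)\bigr)$, invoke the defining inequality of $\delta(\Dc)$, and evaluate the $L^{2-\delta}$-norm of a two-valued mean-zero function. The only addition is your discussion of the reduction from $\Prob{}{x\in A}\le\mu$ to $\Prob{}{x\in A}=\mu$, which the paper sidesteps by stating the appendix version with equality; your monotonicity remark handles this (though note that in discrete spaces one cannot always enlarge $A$ to hit $\mu$ exactly, so the clean argument is simply that the right-hand side is nondecreasing in $\mu$ and $\nu$ on $[0,\tfrac12]$, as you say at the end).
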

For the proof of Theorem~\ref{tensor-product} see \cite{GS11}. Theorem~\ref{delta_non_degenerate} and
Theorem~\ref{hypercontractivity_rectangle_general} will be proved in the Appendix.
We will typically apply Theorem~\ref{hypercontractivity_rectangle_general} in situations, where $\mu$ and $\nu$ tend to zero.
Let us instantiate Theorem~\ref{hypercontractivity_rectangle_general} for this case.
\begin{corollary}
    \label{hypercontractivity_rectangle}
    If $\mu = o(1)$ and $\nu = o(1)$ then
    $$
        \Prob{(x, y) \sim \Dc}{x \in A, y \in B} \leq O\left((\mu \nu)^{1 / (2 - \delta(\Dc))}\right).
    $$
\end{corollary}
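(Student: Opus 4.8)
The plan is to derive the corollary directly from Theorem~\ref{hypercontractivity_rectangle_general} by elementary estimates, treating $\delta := \delta(\Dc)$ as a fixed constant with $0 \le \delta \le 1$ (the bound $\delta \le 1$ being built into the definition of $\delta(\Dc)$). Since $\mu = o(1)$ and $\nu = o(1)$ we may assume $\mu, \nu \in (0,1)$, and then the exponent $\tfrac{1}{2-\delta}$ lies in $(\tfrac12, 1]$; that the exponent is $\le 1$ is the only structural fact I will really use, and it is exactly where $\delta(\Dc) \le 1$ enters.

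First I would dispose of the diagonal term $\mu\nu$ on the right-hand side of Theorem~\ref{hypercontractivity_rectangle_general}: since $0 < \mu\nu < 1$ and $\tfrac{1}{2-\delta} \le 1$, raising a number in $(0,1)$ to a power at most $1$ only increases it, so $\mu\nu \le (\mu\nu)^{1/(2-\delta)}$. Next I would bound each of the two symmetric factors. Using $\mu^{2-\delta} \le \mu$ (valid for $\mu \in (0,1)$ because $2-\delta \ge 1$), together with $(1-\mu)^{2-\delta} \le 1$ and $1-\mu \le 1$, we get $\mu^{2-\delta}(1-\mu) + \mu(1-\mu)^{2-\delta} \le 2\mu$, and hence, raising to the power $\tfrac{1}{2-\delta} \le 1$ and noting $2^{1/(2-\delta)} \le 2$,
$$\left(\mu^{2-\delta}(1-\mu) + \mu(1-\mu)^{2-\delta}\right)^{1/(2-\delta)} \le 2\,\mu^{1/(2-\delta)},$$
and symmetrically $\left(\nu^{2-\delta}(1-\nu) + \nu(1-\nu)^{2-\delta}\right)^{1/(2-\delta)} \le 2\,\nu^{1/(2-\delta)}$. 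Multiplying these two factor bounds gives $4(\mu\nu)^{1/(2-\delta)}$, and adding the $\mu\nu$ estimate yields
$$\Prob{(x,y)\sim\Dc}{x \in A, y \in B} \le (\mu\nu)^{1/(2-\delta)} + 4(\mu\nu)^{1/(2-\delta)} = O\!\left((\mu\nu)^{1/(2-\delta)}\right),$$
with an implied constant ($5$ here, or smaller with a bit more care) depending only on $\delta(\Dc)$.

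There is essentially no obstacle: all the content sits in Theorem~\ref{hypercontractivity_rectangle_general}, and the corollary is bookkeeping. The single point worth flagging is that every inequality above rests on $2 - \delta(\Dc) \ge 1$; if $\delta$ could exceed $1$, the term $\mu^{2-\delta}$ would dominate $\mu$ and the claimed exponent would be wrong. Since the definition of $\delta(\Dc)$ restricts it to $\delta \le 1$ (and Theorem~\ref{delta_non_degenerate} gives $\delta \ge 0$), we are safe; in the degenerate case $\delta(\Dc) = 0$ the statement degrades gracefully to the trivial Cauchy--Schwarz-type bound $O(\sqrt{\mu\nu})$.
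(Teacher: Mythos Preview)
Your proof is correct and is exactly the elementary bookkeeping the paper has in mind: the paper does not give a separate argument for the corollary, merely saying ``Let us instantiate Theorem~\ref{hypercontractivity_rectangle_general} for this case,'' and your estimates make that instantiation explicit. In fact your inequalities hold for all $\mu,\nu\in(0,1)$, not just $\mu,\nu=o(1)$; the asymptotic hypothesis is only there to flag the regime in which the bound is nontrivial.
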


Now we state and prove a quantitative version of G\'acs--K\"orner theorem~\cite{GK73}
for the case of uniform marginal distributions.

\begin{theorem}
    Let $\Dc$ be a distribution over $X \times Y$ with uniform marginal distributions.
    We sample $n$ independent copies $(X^1, Y^1)$, $(X^2, Y^2)$, \ldots, $(X^n, Y^n)$ of $\Dc$
    and want to transmit $x := X^1X^2\ldots X^n$ and $y := Y^1 Y^2 \ldots Y^n$ as in Fig.~\ref{network-1.mps}
    (with probability $1 - 2^{-\Omega(n)}$). Then,
    $$
        \abs{a} + \abs{b} + (2 - \delta(\Dc)) \abs{c} \geq (\log \abs{X} + \log \abs{Y} + o(1)) n.
    $$

\end{theorem}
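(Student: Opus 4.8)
The plan is to reduce the theorem to Proposition~\ref{prop:upper} applied to the product distribution $\Dc^{\otimes n}$ on $X^n\times Y^n$, controlling the relevant rectangle probability through Corollary~\ref{hypercontractivity_rectangle} and the tensorization bound of Theorem~\ref{tensor-product}. Write $\alpha=\abs a$, $\beta=\abs b$, $\gamma=\abs c$, $M=\abs X$, $N=\abs Y$, $\delta=\delta(\Dc)$. Assume $(\alpha,\beta,\gamma)$ belongs to the profile of $\Dc^{\otimes n}$ with success probability at least $\tfrac12$ (this follows from the theorem's hypothesis for large $n$), witnessed by maps $f\colon\zo^\alpha\times\zo^\gamma\to X^n$ and $g\colon\zo^\beta\times\zo^\gamma\to Y^n$. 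For each $c\in\zo^\gamma$ put $A_c=\set{f(a,c):a\in\zo^\alpha}$ and $B_c=\set{g(b,c):b\in\zo^\beta}$, so $\abs{A_c}\le 2^\alpha$, $\abs{B_c}\le 2^\beta$, and the set of transmissible pairs is $\bigcup_c(A_c\times B_c)$. Hence
\[
   \tfrac12\;\le\;\Prob{(x,y)\sim\Dc^{\otimes n}}{\textstyle\bigcup_c\set{x\in A_c,\ y\in B_c}}\;\le\;\sum_{c\in\zo^\gamma}\Prob{(x,y)\sim\Dc^{\otimes n}}{x\in A_c,\ y\in B_c}.
\]
Projecting onto either coordinate and using that the marginals of $\Dc^{\otimes n}$ are uniform, $\bigcup_c A_c$ has size $\ge M^n/2$ and $\bigcup_c B_c$ has size $\ge N^n/2$, which gives the ``covering'' inequalities $\alpha+\gamma\ge n\log M-1$ and $\beta+\gamma\ge n\log N-1$.

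First I would dispose of the degenerate ranges. If $\alpha\ge n\log M-\log n$, then combining with $\beta+\gamma\ge n\log N-1$ and $2-\delta\ge 1$ gives $\alpha+\beta+(2-\delta)\gamma\ge\alpha+\beta+\gamma\ge n(\log M+\log N)-\log n-1$, which is $\ge(\log M+\log N+o(1))n$; symmetrically if $\beta\ge n\log N-\log n$. So it remains to treat the case $\alpha<n\log M-\log n$ and $\beta<n\log N-\log n$.

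In that case I would set $\mu=2^\alpha/M^n$ and $\nu=2^\beta/N^n$, so $\mu,\nu\le 1/n=o(1)$, and $\Prob{\Dc^{\otimes n}_X}{x\in A_c}\le\mu$, $\Prob{\Dc^{\otimes n}_Y}{y\in B_c}\le\nu$ for every $c$. By Theorem~\ref{tensor-product} and induction on $n$, $\delta(\Dc^{\otimes n})\ge\delta$, hence $2-\delta(\Dc^{\otimes n})\le 2-\delta$ and, since $\mu\nu<1$, $(\mu\nu)^{1/(2-\delta(\Dc^{\otimes n}))}\le(\mu\nu)^{1/(2-\delta)}$. Corollary~\ref{hypercontractivity_rectangle} applied to $\Dc^{\otimes n}$ then bounds every summand above by $O\!\left((\mu\nu)^{1/(2-\delta)}\right)=O\!\left(2^{(\alpha+\beta-n\log M-n\log N)/(2-\delta)}\right)$, uniformly in $c$. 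Summing the $2^\gamma$ terms and using $\tfrac12\le\sum_c(\cdots)$ gives $\tfrac12\le 2^\gamma\cdot O\!\left(2^{(\alpha+\beta-n\log(MN))/(2-\delta)}\right)$; taking logarithms and multiplying through by $2-\delta\le 2$ yields $\alpha+\beta+(2-\delta)\gamma\ge n(\log M+\log N)-O(1)$, which is the asserted bound.

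The genuinely substantive input is the hypercontractive rectangle bound (Corollary~\ref{hypercontractivity_rectangle}, equivalently Theorem~\ref{hypercontractivity_rectangle_general}) together with the fact that $\delta$ does not decrease under tensor powers (Theorem~\ref{tensor-product}); everything else is bookkeeping. I expect the only delicate point to be the interplay of the exponents: one must use $\delta(\Dc^{\otimes n})\ge\delta(\Dc)$ precisely in the direction that makes $1/(2-\delta(\Dc^{\otimes n}))\ge 1/(2-\delta(\Dc))$ and then exploit $\mu\nu<1$ so that the larger exponent only helps. (If one prefers to avoid the asymptotic phrasing of the corollary, one can invoke Theorem~\ref{hypercontractivity_rectangle_general} directly with these $\mu,\nu$; and only constant success probability is ever used, so the $1-2^{-\Omega(n)}$ in the statement is not needed for the lower bound.)
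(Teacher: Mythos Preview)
Your proof is correct and follows exactly the route the paper indicates: combine Theorem~\ref{tensor-product}, Corollary~\ref{hypercontractivity_rectangle}, and (the distributional analogue of) Proposition~\ref{prop:upper}. The paper's own proof is the one-line remark that these three ingredients suffice; you have simply spelled out the bookkeeping, including the harmless case split ensuring $\mu,\nu=o(1)$ and the observation that $\delta(\Dc^{\otimes n})\ge\delta(\Dc)$ pushes the exponent in the right direction because $\mu\nu<1$.
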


That is, if $\Dc$ is non-degenerate, then the bounds (\ref{it-1}) and (\ref{it-2}) could not be tight simultaneously,
since $\delta(\Dc) > 0$, $H(X) = \log \abs{X}$, and $H(Y) = \log \abs{Y}$.

The proof is a trivial combination of Theorem~\ref{tensor-product}, Corollary~\ref{hypercontractivity_rectangle}, and
Proposition~\ref{prop:upper}.

\section{Fixed distance graph and its rectangles}
\label{sec:fixed-distance}

In this section we use Propositions~\ref{prop:upper} and~\ref{prop:lower} and the results of
Section~\ref{sec:hypercontractivity} to analyze the transmission of two strings with Hamming distance $\eps n$
(see the beginning of Section~\ref{sec:introduction}).

Let us consider the following bipartite graph $G_{n, \eps} \subseteq \zo^n \times \zo^n$. There is an edge $(x, y)$
iff Hamming distance between $x$ and $y$ is exactly $\eps n$.

We are interested in the profile of $G_{n, \eps}$. Let us denote $M_{n, \eps}$ its adjacency matrix.
For simplicity we restrict ourselves to the case where
$\alpha$ and $\beta$ (lengths of messages $a$ and $b$ sent to each node separately) are equal:
$\alpha = \beta = \tau n$, $\gamma = \varkappa n$, where $0 < \tau < 1$, $0 < \varkappa < 1 + H(\eps)$ are constants.
Let us denote by $\Lambda(\eps, \tau)$ the following fraction:
$$
  \Lambda(\eps, \tau) = \lim_{n \to \infty} \frac{\log (R(\eps, \tau, n) / \abs{G_{n, \eps}})}{n},
$$
where $R(\eps, \tau, n)$ is the maximum number of 1's in a rectangle of $M_{n, \eps}$
of size $2^{\tau n} \times 2^{\tau n}$, and $\abs{G_{n, \eps}}$ is the total number of edges in $G_{n, \eps}$.
Since $G_{n, \eps}$ is edge-transitive, bounds for $\Lambda$ can be directly translated into profile
bounds (via Propositions~\ref{prop:upper} and~\ref{prop:lower}).

Let us state our main combinatorial result:

\begin{theorem}
\label{fixed_distance_estimate}
    \textup(Lower bound\textup)
      %If $\tau > \eps$, then $\Lambda(\eps, \tau) \geq -(1 - \tau + H(\eps) - \tau H(\eps / \tau))$.
      If $\tau < H\left(1 - \sqrt{1 - \eps}\right)$, then $\Lambda(\eps, \tau) \geq -(1 + H(\eps) - 2\tau)$.
      If $\tau \geq H\left(1 - \sqrt{1 - \eps}\right)$, then 
      $$
        \Lambda(\eps, \tau) \geq -\left(1 + H(\eps) - \tau
                - \alpha H\left(\frac{\varepsilon}{2 \alpha}\right)
                - (1 - \alpha) H\left(\frac{\varepsilon}{2 (1 - \alpha)}\right)\right),
      $$
      where $0 < \alpha < 1/2$ and $H(\alpha) = \tau$.
    
    \textup(Upper bound\textup)
      $\Lambda(\eps, \tau) \leq -\frac{1 - \tau}{1 - \eps}$.
\end{theorem}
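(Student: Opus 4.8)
The plan is to prove the lower and upper bounds separately, since they come from completely different constructions.

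\medskip

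\noindent\textbf{Lower bound (exhibiting a large rectangle).} I would construct an explicit rectangle $X' \times Y'$ in $M_{n,\eps}$ and count the edges inside it. The natural guess is to take $X'$ and $Y'$ to be Hamming balls of radius $\rho n$ around a fixed center, say $0^n$, with $H(\rho) = \tau$ so that $|X'| = |Y'| = 2^{\tau n + o(n)}$. An edge $(x,y)$ with $x,y$ both in the ball of radius $\rho n$ lies in the rectangle exactly when $d(x,y) = \eps n$; by the triangle inequality this forces $\eps \le 2\rho$, i.e. $\rho \ge \eps/2$, which is exactly the regime where $\tau \ge H(\eps/2)$. To count such pairs: fix $x$ of weight $\le \rho n$; the number of $y$ at distance $\eps n$ from $x$ and inside the ball is maximized by a local analysis over how many of the $\eps n$ flipped coordinates are inside versus outside the support of $x$. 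This optimization produces the $\alpha H(\eps/2\alpha) + (1-\alpha)H(\eps/2(1-\alpha))$ term, with $H(\alpha) = \tau$ governing the split $\alpha n$ / $(1-\alpha)n$ between ``coordinates where $x$ is $1$'' and ``coordinates where $x$ is $0$''. For the complementary regime $\tau < H(1 - \sqrt{1-\eps})$ the balls are too small to help and instead one should take $X'$, $Y'$ to be subcubes (fix $n - \tau n$ coordinates), inside which every pair at distance $\eps n$ survives; a direct count of pairs at distance $\eps n$ with both endpoints in a fixed $2^{\tau n} \times 2^{\tau n}$ combinatorial rectangle gives $\Lambda \ge -(1 + H(\eps) - 2\tau)$. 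The threshold $H(1 - \sqrt{1-\eps})$ is where the ball construction overtakes the subcube construction, which one checks by comparing the two expressions (the identity $1 - \sqrt{1-\eps}$ comes from setting $\alpha = \eps/2\alpha$, i.e. $\alpha = \sqrt{\eps/2}\cdot$something — this algebraic matching is a routine but slightly fiddly calculation).

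\medskip

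\noindent\textbf{Upper bound (no rectangle can be too large).} Here I would apply the hypercontractivity machinery of Section~\ref{sec:hypercontractivity}. The distribution on $\zo^n \times \zo^n$ where $(x,y)$ is a uniformly random edge of $G_{n,\eps}$ has uniform marginals but is \emph{not} a product distribution, so I cannot apply Theorem~\ref{tensor-product} directly to it. Instead I would relate $G_{n,\eps}$ to the noise distribution $\Dc_\eps$ on $\zo \times \zo$ (flip each bit independently with probability $\eps$), whose $n$-fold tensor power $\Dc_\eps^{\otimes n}$ concentrates on pairs at distance $\approx \eps n$. A rectangle $A \times B$ with $|A|, |B| \le 2^{\tau n}$ has $\Dc_X$-measure and $\Dc_Y$-measure at most $2^{-(1-\tau)n}$, so by Theorem~\ref{tensor-product} ($\delta(\Dc_\eps^{\otimes n}) \ge \delta(\Dc_\eps)$, which one computes for the binary symmetric channel — it equals something like $1 - (1-2\eps)^2$) and Corollary~\ref{hypercontractivity_rectangle}, we get $\Prob{\Dc_\eps^{\otimes n}}{x \in A, y \in B} \le 2^{-(1-\tau)n \cdot 2 / (2 - \delta)}$. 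Now I would argue that for the edge count this translates (after stripping off the polynomial $\binom{n}{\eps n}$ normalization and using that $\Dc_\eps^{\otimes n}$ puts weight $2^{-\Theta(\log n)}$, not exponentially small weight, on exactly distance $\eps n$) into $\Lambda(\eps,\tau) \le -\frac{2(1-\tau)}{2 - \delta(\Dc_\eps)}$. The claimed bound $-\frac{1-\tau}{1-\eps}$ then requires $\frac{2}{2-\delta} = \frac{1}{1-\eps}$, i.e. $\delta(\Dc_\eps) = 2 - 2(1-\eps) \cdot$(correction) — here I would need to use the precise value of $\delta$ for the binary noise operator, which is a known hypercontractivity constant; this identification is the place where I must be careful with the exact exponent in the definition of $\delta(\Dc)$.

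\medskip

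\noindent\textbf{Main obstacle.} The hard part will be the upper bound: specifically, extracting a clean bound on the \emph{exactly}-distance-$\eps n$ graph from a hypercontractive inequality that naturally controls the \emph{noise} distribution (which spreads over all distances), and pinning down $\delta(\Dc_\eps)$ so that the exponent comes out to exactly $\frac{1-\tau}{1-\eps}$ rather than merely $\Theta(1-\tau)$. The lower bound is more of a careful optimization problem — identifying the right shape of the rectangle in each regime and verifying the threshold $\tau = H(1-\sqrt{1-\eps})$ by matching two explicit entropy expressions — but presents no conceptual difficulty.
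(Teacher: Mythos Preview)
Your upper-bound plan matches the paper's exactly: pass from $G_{n,\eps}$ to the product noise distribution $D_{1,\eps}^{\otimes n}$ (the conditioning costs only a polynomial factor since $\Prob{}{d(x,y)=\eps n}\ge 1/\poly{n}$), then apply Theorem~\ref{tensor-product} and Corollary~\ref{hypercontractivity_rectangle}. The missing constant is supplied by the Two-Point Inequality $\delta(D_{1,\eps})\ge 2\eps$, which gives $2/(2-\delta)=1/(1-\eps)$ on the nose; your guess $\delta=1-(1-2\eps)^2$ would not produce the stated exponent.

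For the lower bound in the regime $\tau\ge H(1-\sqrt{1-\eps})$, your Hamming-ball construction is essentially the paper's, which uses the Hamming \emph{sphere} $C=\{x:|x|=\alpha n\}$ with $H(\alpha)=\tau$. The sphere is cleaner because, to keep $y\in C$, one must flip exactly $\eps n/2$ ones and $\eps n/2$ zeros of $x$; the edge count is then simply $\binom{n}{\alpha n}\binom{\alpha n}{\eps n/2}\binom{(1-\alpha)n}{\eps n/2}$ with no further optimization.

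The genuine gap is your small-$\tau$ construction. A subcube of dimension $\tau n$ does \emph{not} yield a rectangle with $2^{(2\tau+o(1))n}$ edges: pairs inside it have distance at most $\tau n$, and even when $\tau\ge\eps$ the edge count is only $2^{\tau n}\binom{\tau n}{\eps n}=2^{(\tau(1+H(\eps/\tau))+o(1))n}$, which is exponentially smaller than $2^{2\tau n}$. But the claim $\Lambda\ge-(1+H(\eps)-2\tau)$ says precisely that the $2^{\tau n}\times 2^{\tau n}$ rectangle is \emph{completely} full of edges (up to subexponential factors), and no subcube---or pair of subcubes---achieves that. The paper's argument is different: at the threshold $\tau_0=H(1-\sqrt{1-\eps})$ one checks that the sphere construction already gives a $2^{\tau_0 n}\times 2^{\tau_0 n}$ rectangle with $2^{(2\tau_0+o(1))n}$ edges, and then for any $\tau<\tau_0$ one simply \emph{subsamples} that rectangle down to size $2^{\tau n}\times 2^{\tau n}$, which preserves the (essentially full) edge density.
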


\subsection*{The lower bound on $\Lambda(\eps, \tau)$}

To prove that $\Lambda(\eps, \tau)$ is large
we show that some rectangle in $M_{n, \eps}$ has many $1$'s. Indeed, consider
a rectangle $C \times C$, where elements of $C$ 
are strings that contain exactly $\alpha n$ ones;
obviously there are $2^{(H(\alpha) + o(1)) n}$ of them (i.e. $\tau = H(\alpha) + o(1)$).
The number of 1's in $C \times C$ equals
$$
  \binom{n}{\alpha n} \binom{\alpha n}{\eps n / 2} \binom{(1 - \alpha) n}{\eps n / 2}
  = 2^{(H(\alpha) + \alpha H(\eps / 2 \alpha) + (1 - \alpha) H(\eps / 2 (1 - \alpha)) + o(1)) n}.
$$

The total number of 1's in $M_{n, \eps}$ is
$$
  2^{n} \binom{n}{\eps n} = 2^{(1 + H(\eps) + o(1)) n}.
$$

Thus
\begin{multline*}
  \Lambda(\eps, \tau) \geq \lim_{n \to \infty} \frac{1}{n} \cdot \log \frac{2^{(H(\alpha) + \alpha H(\eps / 2 \alpha) + (1 - \alpha) H(\eps / 2 (1 - \alpha)) + o(1)) n}}{2^{(1 + H(\eps) + o(1)) n}} =\\
      = -(1 + H(\eps) - H(\alpha) - \alpha H(\eps / 2 \alpha) - (1 - \alpha) H(\eps / 2 (1 - \alpha))).
\end{multline*}

If $\tau = H\left(1 - \sqrt{1 - \eps}\right)$, then the sphere gives a rectangle with $2^{(2\tau + o(1)) n}$ ones,
which is clearly optimal. For $\tau' < \tau$ we can subsample this rectangle and get $2^{\tau' n} \times 2^{\tau' n}$
rectangle with $2^{(2\tau' + o(1)) n}$ ones.

%Lower bound: reduction to continuous distribution; proof.

\subsection*{The upper bound on $\Lambda(\eps, \tau)$}

Let us prove that $\Lambda(\eps, \tau) \leq -\frac{1-\tau}{1-\eps}$.
Consider the distribution $D_{n, \eps}$ on $(x, y) \in \zo^n \times \zo^n$: 
$x$ is uniformly distributed in $\zo^n$; and $y\in\zo^n$ is obtained from $x$
by independently changing each bit with probability $\varepsilon$. 

This distribution generates an edge in $G_{n,\varepsilon}$ with probability at least $1/n$,
and all the edges of $G_{n,\varepsilon}$ are equiprobable. So instead of counting the number of edges in
a rectangle, we may estimate the $D_{n,\varepsilon}$-probability of this rectangle (the 
factor $n$ does not matter with our precision). It is enough to show, therefore, that
for every $C_1, C_2 \subseteq \zo^n$ such that
$\abs{C_1} = \abs{C_2} = 2^{\tau n}$ the following inequality holds:
\begin{equation}
  \label{continuous_isoperimetry}
  \Prob{(x, y) \sim D_{n, \eps}}{(x, y) \in C_1 \times C_2} \leq
  2^{-(\frac{1-\tau}{1-\eps}+o(1))n}.
\end{equation}

If we show that $\delta(D_{n,\varepsilon}) \geq 2 \varepsilon$, then we can plug this bound
into Corollary~\ref{hypercontractivity_rectangle}, and obtain (\ref{continuous_isoperimetry}).
Since
$D_{n, \varepsilon} = D_{1, \varepsilon}^{\otimes n}$, using Theorem~\ref{tensor-product} one can reduce this statement
to the following well-known inequality.
\begin{theorem}[Two-Point Inequality]
    $$
        \delta(D_{1, \varepsilon}) \geq 2 \varepsilon.
    $$
\end{theorem}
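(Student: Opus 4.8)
The plan is to unwind the definition of $\delta(D_{1,\eps})$ and reduce the claim $\delta(D_{1,\eps})\geq 2\eps$ to an explicit two-point inequality that can be checked by elementary calculus. Setting $\delta := 2\eps$, we have $2-\delta = 2-2\eps$ and $2+\frac{\delta}{1-\delta} = 2 + \frac{2\eps}{1-2\eps} = \frac{2-2\eps}{1-2\eps}$, so the two norms in play are $p := \frac{2-2\eps}{1-2\eps} \geq 2$ and $q := 2-2\eps \leq 2$, and note the convenient relation $p(1-2\eps) = q$. What must be shown is that for every function $f$ on the two-point space $\zo$ (with the uniform measure, since $D_{1,\eps}$ has uniform marginals), writing $f(0)=s$, $f(1)=t$, the operator $T := T_{D_{1,\eps}}$ given by $(Tf)(0) = (1-\eps)s + \eps t$ and $(Tf)(1) = \eps s + (1-\eps)t$ satisfies $\|Tf\|_p \leq \|f\|_q$.

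First I would reduce to the case $f \geq 0$: since $|Tf| \leq T|f|$ pointwise and $\|\,|f|\,\|_q = \|f\|_q$, it suffices to prove the inequality for nonnegative $f$. By homogeneity I may further normalize, e.g. scale so that one coordinate equals $1$; so the task becomes a one-parameter (or two-parameter before normalization) inequality in the entries of $f$. Explicitly, one wants
\begin{equation*}
    \Bigl(\tfrac12\bigl((1-\eps)s+\eps t\bigr)^p + \tfrac12\bigl(\eps s + (1-\eps)t\bigr)^p\Bigr)^{1/p} \leq \Bigl(\tfrac12 s^q + \tfrac12 t^q\Bigr)^{1/q}
\end{equation*}
for all $s,t \geq 0$. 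This is precisely the classical two-point hypercontractive inequality in its $(q,p)$-form; the standard route is to take $q$-th powers on the right, $p$-th powers after rearranging, and verify the resulting single-variable inequality by differentiating with respect to (say) $s$ at fixed $t$, or by substituting $s = 1+u$, $t = 1-u$ and expanding both sides as power series in $u$ around $u=0$, checking term by term that the coefficients on the left are dominated by those on the right. The key algebraic fact that makes the expansion work is exactly the Bonami–Beckner relation between the noise parameter $1-2\eps$ and the exponents: one checks that the quadratic terms match with the factor $(1-2\eps)^2 = q/p \cdot (\text{something})$ and all higher even-degree coefficients go the right way.

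Alternatively — and this is probably the cleanest way to present it — I would simply \emph{cite} the inequality: the statement $\|Tf\|_p \leq \|f\|_q$ with $T$ the $\eps$-noise operator and $(1-2\eps)^2 \leq \frac{q-1}{p-1}$ is the two-point Bonami–Beckner/hypercontractivity inequality, and here $\frac{q-1}{p-1} = \frac{(1-2\eps)}{(2-2\eps)/(1-2\eps) - 1} = \frac{1-2\eps}{(1-2\eps)/(1-2\eps)}$, wait — one verifies directly that $\frac{q-1}{p-1} = (1-2\eps)^2$ with equality, so the inequality holds with the sharp constant. Thus $\delta(D_{1,\eps}) \geq 2\eps$ follows, and then Theorem~\ref{tensor-product} gives $\delta(D_{n,\eps}) = \delta(D_{1,\eps}^{\otimes n}) \geq 2\eps$, which plugged into Corollary~\ref{hypercontractivity_rectangle} yields (\ref{continuous_isoperimetry}) and hence the upper bound on $\Lambda(\eps,\tau)$.

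The main obstacle is the single-variable calculus verification of the normalized two-point inequality: one has to be careful that the chosen exponents $p = \frac{2-2\eps}{1-2\eps}$ and $q = 2-2\eps$ are admissible (in particular $\eps < 1/2$, else $p$ is not defined — for $\eps \geq 1/2$ one uses symmetry $\eps \leftrightarrow 1-\eps$, or notes the bound is vacuous), and that the power-series comparison genuinely goes through for \emph{all} $s,t\geq 0$, not just near $s=t$. Since this is a textbook inequality (Bonami 1970, Beckner 1975), I expect to dispatch it by reference rather than reproving it in full, noting only the verification of the exponent relation, which is the content linking our normalization $\delta = 2\eps$ to the classical form.
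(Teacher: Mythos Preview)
Your proposal is correct and in fact more detailed than the paper's own treatment: the paper does not prove the Two-Point Inequality at all but simply refers the reader to \cite{GS11}, whereas you correctly identify it as the classical Bonami--Beckner two-point hypercontractive inequality and verify the sharp exponent relation $\tfrac{q-1}{p-1}=(1-2\eps)\cdot(1-2\eps)^{-1}\cdot(1-2\eps)^2 = (1-2\eps)^2 = \rho^2$ (your computation, modulo the mid-sentence hesitation, is right: $q-1=1-2\eps$ and $p-1=\tfrac{1}{1-2\eps}$). So your approach is essentially the same as the paper's---defer to the literature---only with the link to the standard formulation made explicit.
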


For the proof see \cite{GS11}.

\section{The profile of the fixed-distance graph}
\label{sec:gacs-revisited}

First, we use Theorem~\ref{fixed_distance_estimate} and Propositions~\ref{prop:lower} and~\ref{prop:upper} to
get explicit bounds for the combinatorial profile of the fixed-distance 
graph $G_{n, \eps}$ that improve those given by Proposition~\ref{prop:trivial_bounds}.

\begin{theorem}
  \label{fixed_profile_bounds}
  Let $0 < \tau, \varkappa < 1$ be constants.
  \begin{itemize}
    \item
	  If $\varkappa < (1 - \tau) / (1 - \eps)$, then for sufficiently large $n$ the triple
	  $\langle \tau n, \tau n, \varkappa n\rangle$ does not belong to the profile of $G_{n, \eps}$.
	\item
      There are two following ``positive bounds''.
      \begin{itemize}
        \item
            If $\tau < H\left(1 - \sqrt{1 - \eps}\right)$ and $\varkappa > 1 + H(\eps) - 2\tau$,
            then for sufficiently large $n$ the triple $\langle \tau n, \tau n, \varkappa n\rangle$
            belongs to the profile of $G_{n, \eps}$.
        \item
            If $\tau \geq H\left(1 - \sqrt{1 - \eps}\right)$ and
            $$
                \varkappa > 1 + H(\eps) - \tau
                        - \alpha H\left(\frac{\varepsilon}{2 \alpha}\right)
                        - (1 - \alpha) H\left(\frac{\varepsilon}{2 (1 - \alpha)}\right),
            $$
            where $0 < \alpha < 1/2$ and $H(\alpha) = \tau$,
            then for sufficiently large $n$ the triple $\langle \tau n, \tau n, \varkappa n\rangle$
            belongs to the profile of $G_{n, \eps}$.
      \end{itemize}
  \end{itemize}
\end{theorem}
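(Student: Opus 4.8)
The plan is to read off both parts of the theorem from the combinatorial estimates of Theorem~\ref{fixed_distance_estimate}, via Propositions~\ref{prop:upper} and~\ref{prop:lower}, using that $G_{n,\eps}$ is edge-transitive (permuting the $n$ coordinates of $x$ and $y$ simultaneously, and XOR-ing both $x$ and $y$ by a fixed string, are automorphisms, and together they act transitively on the edges). Two elementary facts will be used throughout. First, $\abs{G_{n,\eps}} = 2^n\binom{n}{\eps n} = 2^{(1+H(\eps)+o(1))n}$, so $\log\log\abs{G_{n,\eps}} = O(\log n) = o(n)$; hence the additive $\log\log\abs{E}$ loss in Proposition~\ref{prop:lower} is negligible on the scale of $n$. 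Second, the profile is monotone in each coordinate — in particular, if $\langle\tau n,\tau n,\gamma\rangle$ lies in the profile then so does $\langle\tau n,\tau n,\varkappa n\rangle$ for every $\varkappa n\ge\gamma$, simply by padding the common message with unused bits. (All quantities such as $\tau n$, $\varkappa n$ are rounded to integers; the rounding is $o(n)$ and vanishes into the $o(1)$ terms.)

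For the negative bound I would assume $\varkappa < (1-\tau)/(1-\eps)$. The upper-bound half of Theorem~\ref{fixed_distance_estimate} gives $\Lambda(\eps,\tau)\le -(1-\tau)/(1-\eps) < -\varkappa$, so by the definition of $\Lambda$ there is $N$ with $\log\bigl(R(\eps,\tau,n)/\abs{G_{n,\eps}}\bigr) < -\varkappa n$ — that is, $R(\eps,\tau,n)\cdot 2^{\varkappa n} < \abs{G_{n,\eps}}$ — for all $n\ge N$. Applying Proposition~\ref{prop:upper} with $\alpha=\beta=\tau n$ and $\gamma=\varkappa n$ then shows $\langle\tau n,\tau n,\varkappa n\rangle$ is not in the profile of $G_{n,\eps}$ for $n\ge N$.

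For the positive bounds I would use the lower-bound half of Theorem~\ref{fixed_distance_estimate}, which is proved by an explicit rectangle $C\times C$. In the regime $\tau < H(1-\sqrt{1-\eps})$ that rectangle (a subsampled sphere) has $R(\eps,\tau,n)\ge 2^{(2\tau-o(1))n}$ edges, whence $\abs{G_{n,\eps}}/R(\eps,\tau,n) \le 2^{(1+H(\eps)-2\tau+o(1))n}$. Taking $\gamma$ to be the least integer with $2^\gamma\ge\abs{G_{n,\eps}}/R(\eps,\tau,n)$, Proposition~\ref{prop:lower} places $\langle\tau n,\tau n,\gamma+\log\log\abs{G_{n,\eps}}\rangle$ in the profile, and by the first paragraph $\gamma+\log\log\abs{G_{n,\eps}} = (1+H(\eps)-2\tau+o(1))n$. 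The hypothesis $\varkappa > 1+H(\eps)-2\tau$ then forces $\varkappa n \ge \gamma+\log\log\abs{G_{n,\eps}}$ for all large $n$, and monotonicity of the profile finishes this case. The regime $\tau\ge H(1-\sqrt{1-\eps})$ is word-for-word the same, except that the relevant rectangle is the sphere of Hamming weight $\alpha n$ with $H(\alpha)=\tau$, for which $R(\eps,\tau,n) \ge 2^{(\tau+\alpha H(\eps/2\alpha)+(1-\alpha)H(\eps/2(1-\alpha))-o(1))n}$, so $\abs{G_{n,\eps}}/R(\eps,\tau,n) \le 2^{(1+H(\eps)-\tau-\alpha H(\eps/2\alpha)-(1-\alpha)H(\eps/2(1-\alpha))+o(1))n}$, which is precisely the threshold on $\varkappa$ claimed in the statement.

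I do not expect a real obstacle here beyond careful bookkeeping: the only subtleties are the order of quantifiers when passing from the asymptotic statements about $\Lambda(\eps,\tau)$ to the ``for sufficiently large $n$'' in the conclusions, and the observation — collected in the first paragraph — that both the $\log\log\abs{E}$ overhead of Proposition~\ref{prop:lower} and the integer rounding of $\tau n,\varkappa n$ contribute only $o(n)$ and hence do not affect any of the stated thresholds.
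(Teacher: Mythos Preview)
Your proposal is correct and follows exactly the route the paper intends: the paper does not give a separate proof of Theorem~\ref{fixed_profile_bounds} but simply states that it is obtained from Theorem~\ref{fixed_distance_estimate} via Propositions~\ref{prop:upper} and~\ref{prop:lower}, and you have filled in precisely those details (edge-transitivity of $G_{n,\eps}$, the $\log\log\abs{E}=o(n)$ overhead, and monotonicity of the profile).
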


Figure~\ref{fixed_distance_figure} shows the bounds for the profile for $\varepsilon=0.11\ldots$
(for this value the total number of edges is $2^{1.5n}$). It shows trivial upper and lower bounds from Proposition~$\ref{prop:trivial_bounds}$,
as well as our results (Theorem~\ref{fixed_distance_estimate}).

Not that our bounds are tight in two regions:
\begin{itemize}
    \item if $\tau < H(1 - \sqrt{1 - \eps})$, then our upper bound is equal to the trivial lower bound;
    \item if $\tau = 1 - o(1)$, then our upper bound is asymptotically equal to our lower bound.
\end{itemize}

\begin{figure}
\begin{center}  
\includegraphics[width = 0.6\textwidth]{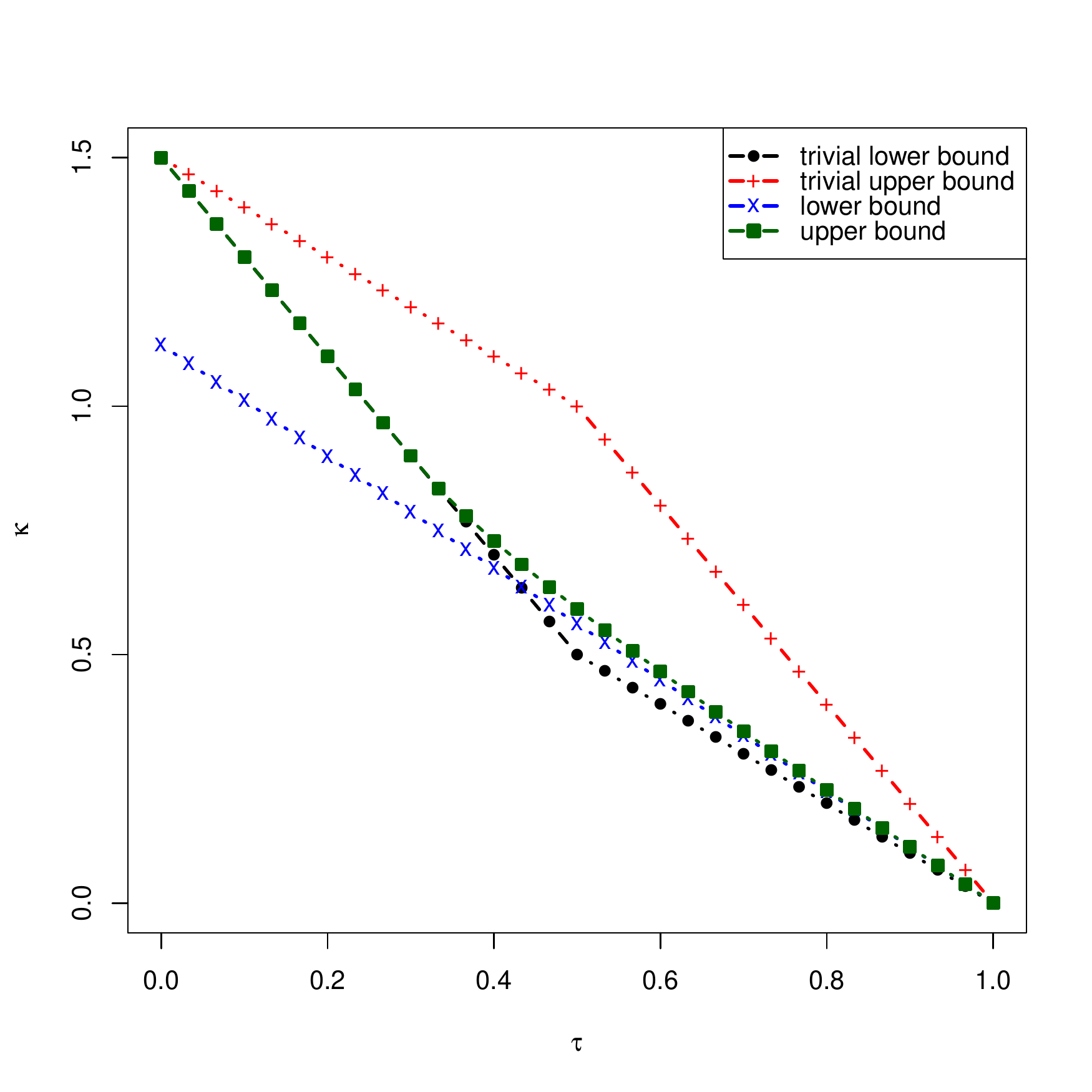}
\end{center}
  \caption{bounds on the profile from Proposition~\ref{prop:trivial_bounds} and Theorem~\ref{fixed_distance_estimate}, $d(x, y) = \eps n$, $\eps = 0.11\ldots$}
  \label{fixed_distance_figure}
\end{figure}

The same bounds can be obtained for other notions of profile, so the results are directly
comparable with previous work. Let us show how this can be done for Kolmogorov complexity.

Let us assume that for every $n$ a bipartite graph $E_n\subseteq\{0,1\}^n\times\{0,1\}^n$ is fixed
(and there is an algorithm computing $E_n$ given $n$). Let $\alpha,\beta,\gamma$ be some positive
rational numbers. Let $R_n(\alpha,\beta)$ be the maximum number of edges covered by a rectangle
$X'\times Y'$ with $\abs{X'}=2^{\alpha n}$ and $\abs{Y'}=2^{\beta n}$.

\begin{proposition}
\label{prop:kolmogorov-bounds}
    There exists some constant $d$ such that:

\begin{itemize}

 \item If $R_n(\alpha,\beta)\cdot 2^{\gamma n} \le \abs{E_n}$, then for all sufficiently large
$n$ for most edges $(x,y)\in E_n$ there is no string $c$ such that $$\KS(c)<\gamma n - d\log n, 
\quad \KS(x|c) < \alpha n - d \log n,
\quad \KS(y|c) < \beta n - d \log n.$$

 \item If $E_n$ is edge-transitive and $R_n(\alpha,\beta)\cdot 2^{\gamma n}\ge\abs{E_n}$,
then for all sufficiently large $n$ for every edge $(x,y)\in E_n$ there exists a string $c$
such that $$\KS(c)<\gamma n + d\log n, 
\quad \KS(x|c) < \alpha n +d\log n,
\quad \KS(y|c) < \beta n +d\log n.$$
\end{itemize}
\end{proposition}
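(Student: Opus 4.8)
The plan is to prove the two items separately, transferring the combinatorial statements of Propositions~\ref{prop:upper} and~\ref{prop:lower} to the Kolmogorov setting: the first by a direct counting argument, the second by making the covering from Proposition~\ref{prop:lower} computable via brute-force search. In both cases "sufficiently large $n$" lets us absorb additive $O(\log n)$ terms coming from self-delimiting encodings of $n$ and of indices, and the constant $d$ is chosen once, large enough to cover all such terms in both items.

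For the first item, suppose towards a contradiction that for some large $n$ a constant fraction (say at least $1/2$) of the edges $(x,y)\in E_n$ admit a string $c=c(x,y)$ with $\KS(c)<\gamma n-d\log n$, $\KS(x\mid c)<\alpha n-d\log n$, and $\KS(y\mid c)<\beta n-d\log n$. Group these ``good'' edges by the value of $c$. For a fixed $c$, the set $X'=\setst{x}{\KS(x\mid c)<\alpha n-d\log n}$ has fewer than $2^{\alpha n-d\log n}\le 2^{\alpha n}$ elements and, similarly, $Y'=\setst{y}{\KS(y\mid c)<\beta n-d\log n}$ has fewer than $2^{\beta n}$ elements, so every good edge associated with $c$ lies in the rectangle $X'\times Y'$ and hence at most $R_n(\alpha,\beta)$ of them belong to $E_n$. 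Since there are fewer than $2^{\gamma n-d\log n}$ strings $c$ with $\KS(c)<\gamma n-d\log n$, the number of good edges is less than $2^{\gamma n-d\log n}\cdot R_n(\alpha,\beta)\le 2^{-d\log n}\abs{E_n}=\abs{E_n}/n^{d}$, which for large $n$ is far below $\abs{E_n}/2$ --- a contradiction. (Here any $d\ge 1$ already suffices, and computability of $E_n$ is not even needed.)

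For the second item, I first upgrade Proposition~\ref{prop:lower} to a \emph{computable} covering. Given $n$, exhaustively search over all $2^{\alpha n}\times 2^{\beta n}$ rectangles to find one, $R$, achieving $R_n(\alpha,\beta)$; then, since the automorphism group of $E_n$ is a computable finite subgroup of $S_X\times S_Y$, enumerate all tuples of $m\le 2^{\gamma n}(1+\log\abs{E_n})$ automorphisms and output the images $R_1,\dots,R_m$ of the first tuple for which $\bigcup_i R_i\supseteq E_n$; the proof of Proposition~\ref{prop:lower} guarantees that such a tuple exists. This is a total algorithm taking $n$ as input, and $\abs{E_n}\le 2^{2n}$ gives $\log m\le\gamma n+\log n+O(1)$. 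Now fix an edge $(x,y)\in E_n$ and pick $i_0$ with $(x,y)\in R_{i_0}=X'_{i_0}\times Y'_{i_0}$; let $c$ be a self-delimiting code for $n$ followed by $i_0$, so $\KS(c)\le\log m+O(\log n)\le\gamma n+d\log n$. From $c$ we recover $n$ and $i_0$, recompute the cover, list $X'_{i_0}$ (exactly $2^{\alpha n}$ strings, being an automorphic image of $X'$) and $Y'_{i_0}$, and describe $x$ and $y$ by their positions in these lists, yielding $\KS(x\mid c)\le\alpha n+O(1)\le\alpha n+d\log n$ and $\KS(y\mid c)\le\beta n+d\log n$.

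The routine parts are the bookkeeping of the $O(\log n)$ overheads and fixing $d$; the one genuinely non-obvious step is the realization that the covering of Proposition~\ref{prop:lower}, though obtained by a probabilistic argument, can be made computable by brute-force search over automorphism tuples (at an astronomically large, but here irrelevant, running time). This is precisely what allows $c$ to be taken as the \emph{index} of a rectangle in a \emph{fixed} computable cover, so that $\KS(c)$ stays at $\gamma n+O(\log n)$ rather than being inflated by a description of the cover itself.
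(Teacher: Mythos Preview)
Your proof is correct and follows essentially the same approach as the paper's. For the first item both arguments count the edges covered by the rectangles $\{x:\KS(x\mid c)<\alpha n\}\times\{y:\KS(y\mid c)<\beta n\}$ as $c$ ranges over low-complexity strings, yielding at most a $1/n^{d}$-fraction of $\abs{E_n}$; for the second item both make the covering of Proposition~\ref{prop:lower} computable by brute-force search from $n$, then take $c$ to encode the index of a covering rectangle. The only cosmetic difference is that you search over tuples of automorphisms applied to a maximum-edge rectangle, whereas the paper searches directly over all coverings of the required size --- either works.
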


\begin{proof}
 
For each $c$ we consider a rectangle $X'\times Y'$ where $X'$ consists of strings $x$ such that 
$\KS(x|c)<\alpha n$ and $Y'$ consists of strings $y$ such that $\KS(y|c)<\beta n$ ($\abs{X'} < 2^{\alpha n}$, $\abs{Y'} < 2^{\beta n}$).
The number of strings $c$ such that $\KS(c) < \gamma n - d \log n$ is less than $2^{\gamma n} / n^d$, so
the edges covered by these rectangles form a minority ($1/n^d$-fraction); for
all other pairs there is no string $c$ with described properties.

The second part of the theorem is also easy. Proposition~\ref{prop:lower} guarantees that
for sufficiently large $n$ the set $\abs{E_n}$ can be covered by $2^{\gamma n+ O(\log n)}$ rectangles
of size $2^{\alpha n}\times 2^{\beta n}$. By exhaustive search, we can find first covering
of this sort (in some natural order), and this covering is determined by $n$, so its
complexity is $O(\log n)$. Then we let $c$ be the number of the rectangle $X'\times Y'$ 
that covers a given edge $(x,y)$. The complexity of $c$ is at most $\gamma n + O(\log n)$. 
Knowing $c$ (and the
entire covering, which has complexity $O(\log n)$) we can describe $x$ and $y$ by their
numbers in $X'$ and $Y'$.

\end{proof}

One can also show that a random pair $(x,y)$ generated with distribution $D_{n,\varepsilon}$
will have a profile (in terms of complexity) within the bounds from Theorem~\ref{prop:kolmogorov-bounds} (with high probability). 
Indeed, the law of large number says that the number of places where $x$ and $y$ differ
is close to $\varepsilon n$, and for each fixed number of difference we get a uniformly
random edge in $G_{n,\varepsilon'}$ for $\varepsilon'$ close to $\varepsilon$. It remains
to note that our bounds are continuous (as functions of $\varepsilon$).

Let us note for comparison that (a weaker) upper bound for this distribution
can be obtained using conditional independence technique from~\cite{R00}.
It is quite involved, but it also has a form $\varkappa \geq c(\eps) (1 - \tau)$
(where $c(\cdot)$ is an explicit (but cumbersome) function)
as in Theorem~\ref{fixed_profile_bounds}.
So, let us compare $c(\eps)$ with $1/(1-\eps)$ (the bigger value is better) for different values of $\eps$ (see Fig.~\ref{comparison}).

\begin{figure}
\begin{center}
\begin{tabular}{|c|c|c|}
  \hline
  $\eps$ & $1/(1-\eps)$ & $c(\eps)$\\
  \hline
  0.1 & 1.11\ldots & 1.000015\ldots\\
  0.2 & 1.25 & 1.016\ldots\\
  0.3 & 1.43\ldots & 1.067\ldots\\
  0.4 & 1.67\ldots & 1.33\ldots\\
  \hline
\end{tabular}
\end{center}
\caption{A comparison of upper bounds from Theorem~\ref{fixed_distance_estimate} and from \cite{R00}}
\label{comparison}
\end{figure}

\section{A stochastic pair with minimal profile}
\label{sec:stoch-min}

A pair with minimal profile is constructed in~\cite{CMRSV02} using
the technique developed by An.~Muchnik~\cite{M86}, \cite{M98}. However, this construction
is quite artificial and cannot be translated into Shannon information theory, because
the constructed pair is not a typical object in a simple family (is not stochastic in
the sense of algorithmic information theory). In this section
we show how to construct a graph with minimal combinatorial profile (for graphs with given
number of edges). For that we first prove (using probabilistic arguments) that such
a graph exists; after that it can be found by brute-force search. This gives us a stochastic
pair with minimal profile. This result was announced in~\cite{CMRSV02} as
an unpublished result of An.~Muchnik (1958--2007) and was not published since then.

To analyze random graphs we need a version of Chernoff inequality
which deals with negatively correlated random variables~\cite{PS97}.

\begin{theorem}[Chernoff inequality]
      \label{chernov}
  Let $X_1, X_2, \ldots, X_n$ be negatively correlated binary random variables 
\textup(i.e., for every $i_1 < i_2 < \ldots < i_k$ 
we have $\Prob{}{X_{i_1} = 1 \wedge X_{i_2} = 1 \wedge \ldots \wedge X_{i_k} = 1} 
\leq \prod_{j=1}^k \Prob{}{X_{i_j} = 1}$\textup).

  Let $\mu = \Exp{}{X_1 + X_2 + \ldots + X_n}$ and $\delta > 0$.
  Then
  $$
    \Prob{}{X_1 + X_2 + \ldots + X_n \geq (1 + \delta) \mu} \leq \left(\frac{e^{\delta}}{(1 + \delta)^{1 + \delta}}\right)^{\mu}.
  $$
\end{theorem}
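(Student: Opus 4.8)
The plan is to mimic the classical moment-generating-function proof of the Chernoff bound, the only new ingredient being a replacement for the identity $\Exp{}{e^{tX}} = \prod_i \Exp{}{e^{tX_i}}$, which fails without independence. Write $X = X_1 + \cdots + X_n$, fix $t > 0$, and use that for a binary variable $e^{tX_i} = 1 + (e^t - 1)X_i$. Multiplying these out over $i$,
\[
  \prod_{i=1}^n e^{tX_i} \;=\; \sum_{S \subseteq \{1,\ldots,n\}} (e^t-1)^{|S|} \prod_{i \in S} X_i ,
\]
where the term $S = \emptyset$ contributes $1$. Taking expectations and noting that $\prod_{i\in S} X_i$ is the indicator of the event $\bigwedge_{i \in S}(X_i = 1)$,
\[
  \Exp{}{\prod_i e^{tX_i}} \;=\; \sum_{S} (e^t - 1)^{|S|}\, \Prob{}{\textstyle\bigwedge_{i\in S} X_i = 1} .
\]
Since $t > 0$, every coefficient $(e^t-1)^{|S|}$ is nonnegative, so the negative-correlation hypothesis can be applied term by term: $\Prob{}{\bigwedge_{i\in S} X_i = 1} \le \prod_{i\in S}\Prob{}{X_i = 1}$. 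Summing the resulting bound back up recombines into a product, giving $\Exp{}{\prod_i e^{tX_i}} \le \prod_i \bigl(1 + (e^t-1)\Exp{}{X_i}\bigr) = \prod_i \Exp{}{e^{tX_i}}$. This is the only place the hypothesis enters, and it is the crux of the argument — I expect it to be the main obstacle, the rest being the textbook computation.

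From here I would proceed exactly as in the independent case. By Markov's inequality applied to the nonnegative variable $e^{tX}$ (and $\{X \ge (1+\delta)\mu\} = \{e^{tX} \ge e^{t(1+\delta)\mu}\}$ because $t>0$),
\[
  \Prob{}{X \ge (1+\delta)\mu} \;\le\; e^{-t(1+\delta)\mu}\, \Exp{}{e^{tX}} \;\le\; e^{-t(1+\delta)\mu} \prod_i\bigl(1 + (e^t-1)\Exp{}{X_i}\bigr) .
\]
Using $1 + z \le e^z$ on each factor together with $\sum_i \Exp{}{X_i} = \mu$ gives $\prod_i \bigl(1 + (e^t-1)\Exp{}{X_i}\bigr) \le \exp\bigl((e^t-1)\mu\bigr)$, hence $\Prob{}{X \ge (1+\delta)\mu} \le \exp\bigl((e^t - 1 - t(1+\delta))\mu\bigr)$. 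Optimizing the exponent over $t>0$, the stationarity condition $e^t = 1+\delta$ has the admissible root $t = \ln(1+\delta)$ (admissible since $\delta>0$ forces $\ln(1+\delta)>0$, as needed above), and substituting it yields $\exp\bigl((\delta - (1+\delta)\ln(1+\delta))\mu\bigr) = \bigl(e^{\delta}/(1+\delta)^{1+\delta}\bigr)^\mu$, the claimed bound.

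One subtlety worth flagging: the hypothesis as stated controls only the "all ones" orthant probabilities $\Prob{}{\bigwedge_{i\in S}X_i = 1}$, not arbitrary monotone events; but the expansion of $\prod_i e^{tX_i}$ for $t>0$ involves precisely those probabilities, and with nonnegative weights, so nothing stronger is required. (A lower-tail statement would instead need $t<0$, where the weights $(e^t-1)^{|S|}$ alternate in sign, and the natural substitute hypothesis would bound the "all zeros" orthant probabilities — but that is outside what is claimed here.)
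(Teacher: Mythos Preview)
Your proof is correct and is essentially the standard argument (originating with Panconesi and Srinivasan). The paper itself does not prove this theorem: it merely states it and cites \cite{PS97}, so there is nothing to compare against beyond noting that your argument is the one found in that reference.
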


Following~\cite{CMRSV02}, we consider (for simplicity) graphs with $2^n$
vertices in each part and $2^{1.5n}$ edges. We choose a random graph of this type (i.e.,
uniformly choose a matrix of size $2^n\times 2^n$ with $2^{1.5n}$ ones).
%Let us note that the proof of Theorem~\ref{uniform_matrix} is similar with
%non-constructive proof of existence of two-source extractors.
 
\begin{theorem}
  \label{uniform_matrix}
  If $\varkappa + \tau < 1.5$ and $\varkappa + 2 \tau < 2$, then for some
  $\varepsilon > 0$ the probability of the event
  ``every $2^{\tau n} \times 2^{\tau n}$ rectangle in $M$
  contains less than $2^{(1.5 - \varkappa - \varepsilon) n}$ ones'' is close to~$1$.
\end{theorem}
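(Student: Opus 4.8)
The plan is to apply the union bound over all $2^{\tau n}\times 2^{\tau n}$ rectangles, controlling each one with the Chernoff inequality of Theorem~\ref{chernov} (this is the usual ``a random graph has no dense rectangle'' argument). First I would fix one rectangle $X'\times Y'$ with $\abs{X'}=\abs{Y'}=2^{\tau n}$, so that it consists of $2^{2\tau n}$ cells. For each such cell let $X_i$ be the indicator of the event that this cell carries a one of $M$. Since $M$ is obtained by choosing $2^{1.5n}$ ones uniformly \emph{without replacement} among all $2^{2n}$ cells, the probability that any fixed $k$ cells all carry ones equals $\binom{2^{2n}-k}{2^{1.5n}-k}\big/\binom{2^{2n}}{2^{1.5n}}\le (2^{1.5n}/2^{2n})^k=\prod_{j}\Prob{}{X_{i_j}=1}$, so the $X_i$ are negatively correlated and Theorem~\ref{chernov} applies with $\mu=\Exp{}{\sum_i X_i}=2^{2\tau n}\cdot 2^{-0.5n}=2^{(2\tau-0.5)n}$.

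Next I would pick $\varepsilon>0$ small enough that simultaneously $1.5-\varkappa-\varepsilon>\tau$ (possible because $\varkappa+\tau<1.5$) and $2-\varkappa-2\tau-\varepsilon>0$ (possible because $\varkappa+2\tau<2$); for instance $\varepsilon=\tfrac12\min(1.5-\varkappa-\tau,\;2-\varkappa-2\tau)$. Set $t:=2^{(1.5-\varkappa-\varepsilon)n}$, the threshold appearing in the statement. The second condition gives $t/\mu=2^{(2-\varkappa-2\tau-\varepsilon)n}\to\infty$, so the relevant tail in Theorem~\ref{chernov} is the one with $1+\delta=t/\mu$ exponentially large. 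Using $e^{\delta}/(1+\delta)^{1+\delta}\le (e/(1+\delta))^{1+\delta}$ and $(1+\delta)\mu=t$, the probability that this fixed rectangle has at least $t$ ones is at most $(e/(1+\delta))^{(1+\delta)\mu}=(e\mu/t)^{t}$, which is below $2^{-t}$ for large $n$ since $e\mu/t\to0$.

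Finally I would take the union bound over all rectangles. Their number is $\binom{2^n}{2^{\tau n}}^2\le 2^{2n\cdot 2^{\tau n}}$, so the probability that \emph{some} $2^{\tau n}\times 2^{\tau n}$ rectangle contains $\ge t$ ones is at most $2^{2n\cdot 2^{\tau n}}\cdot 2^{-t}=2^{2n\cdot 2^{\tau n}-2^{(1.5-\varkappa-\varepsilon)n}}$. The first condition $1.5-\varkappa-\varepsilon>\tau$ forces the subtracted term to dominate for large $n$, so this quantity tends to $0$; the complementary event is exactly the statement of the theorem.

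The main obstacle — and the only place where the hypotheses are genuinely used — is the joint choice of $\varepsilon$: the per-rectangle tail decays like $2^{-t}$ with $t$ of exponent $1.5-\varkappa-\varepsilon$, which must beat the $2^{\Theta(n\,2^{\tau n})}$ rectangles (this is where $\varkappa+\tau<1.5$ enters), while at the same time $t$ must exceed the mean $\mu$ of exponent $2\tau-0.5$ so that we are in the upper tail at all (this is where $\varkappa+2\tau<2$ enters). A minor point to watch is the regime $\tau<1/4$, where $\mu$ is exponentially small: there $1+\delta=t/\mu$ is only larger and the estimate $(e\mu/t)^t\le 2^{-t}$ becomes easier, so no separate treatment is required.
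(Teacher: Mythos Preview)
Your argument is correct and follows exactly the paper's route: Chernoff for negatively correlated indicators on a fixed rectangle (with $\mu=2^{(2\tau-0.5)n}$ and $1+\delta=2^{(2-\varkappa-2\tau-\varepsilon)n}$), then a union bound over the $\binom{2^n}{2^{\tau n}}^2$ rectangles, with the two hypotheses used precisely as you describe. You supply a few details the paper leaves implicit---the explicit verification of negative correlation for sampling without replacement and the reduction of the Chernoff expression to $(e\mu/t)^t$---but these are elaborations of the same proof rather than a different approach.
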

\begin{proof}  
Let us choose $0 < \eps < \min(2-\varkappa-2\tau, 1.5-\varkappa-\tau)$.
The entries of $M$ are negatively correlated, so
the Chernoff inequality (with $1 + \delta = 2^{(2 - \varkappa - 2\tau - \eps) n}$ 
and $\mu = 2^{(2\tau-0.5)n}$) guarantees that for a fixed rectangle of 
size $2^{\tau n}\times 2^{\tau n}$ the probability of the event ``this rectangle 
contains more than $2^{(1.5 - \varkappa - \varepsilon) n}$ ones'' is bounded by
  \begin{equation*}
    \left(\frac{e^{2^{(2-2\tau-\varkappa-\eps)n} - 1}}
    {2^{(2-2\tau-\varkappa-\eps)n
    2^{(2-2\tau-\varkappa-\eps)n}}}\right)^{2^{(2\tau - 0.5) n}} =
    2^{-2^{(1.5 - \varkappa - \eps + o(1)) n}}
  \end{equation*}
Therefore the probability that some $2^{\tau n} \times 2^{\tau n}$ rectangle has too many ones does not exceed
  \begin{equation*}
    \binom{2^n}{2^{\tau n}}^2\cdot  2^{-2^{(1.5 - \varkappa - \eps + o(1)) n}}
   \leq
    2^{2^{(\tau + o(1)) n}-2^{(1.5 - \varkappa - \eps + o(1)) n}}.
  \end{equation*}
  Since $\eps < 1.5 - \varkappa - \tau$, we are done.
\end{proof}
\textbf{Remark.} Note that it follows from Propositions~\ref{prop:upper}
and~\ref{prop:trivial_bounds}
that the bounds in Theorem~\ref{uniform_matrix} are the best possible.

Another simple observation: with high probability all rows and columns of $M$
contain at most $2\cdot 2^{0.5n}$ elements. Indeed, the probability that in a
given row (or column) the number of ones is twice more than the expected value $2^{0.5n}$,
is doubly exponentially small (it follows from theorem~\ref{chernov}), and we have
only exponentially many rows and columns.

Then we can follow the plan described above: we conclude that there is a
graph with both properties and logarithmic complexity; then, 
using the first part of proposition~\ref{prop:kolmogorov-bounds} one can easily
see that a typical edge $(x,y)$ of this graph has minimal profile in the sense explained 
in~\cite{CMRSV02}. (The second property is needed to show that the complexities
of $x$ and $y$ in a typical edge are close to $n$.) 

\section{Acknowledgements}

I would like to thank Andrei Romashchenko and Alexander Shen for posing the problem
and for fruitful discussions, and Alex Samorodnitsky for useful advice.

%\bibliography{../../bibtex/ir}
\bibliography{ir}
\bibliographystyle{unsrt}

\appendix

\section{Properties of $\delta(\Dc)$}

We prove properties of $\delta(\Dc)$ that were promised in Section~\ref{sec:hypercontractivity}.

\begin{theorem}
    $\delta(\Dc) > 0$ iff $\Dc$ is non-degenerate.
\end{theorem}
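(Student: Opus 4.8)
The plan is to prove both directions of the equivalence $\delta(\Dc) > 0 \iff \Dc$ non-degenerate, working directly from the definition
$$
    \delta(\Dc) = \max \setst{\delta \leq 1}{\forall f \in \Fc_Y \; \norm{T_{\Dc} f}_{2 + \frac{\delta}{1 - \delta}} \leq \norm{f}_{2 - \delta}}.
$$
The key observation is that the admissible set of $\delta$'s is closed and contains $\delta = 0$ (that is just Lemma~\ref{lp-contraction} with $p = 2$), so the max is attained and $\delta(\Dc) > 0$ holds iff the hypercontractive inequality is \emph{strict} for $\delta = 0$ in an appropriate sense --- more precisely, iff one can perturb $\delta$ slightly above $0$ without violating it. So both directions reduce to analyzing the behavior of the ratio $\norm{T_{\Dc}f}_{2+\delta/(1-\delta)} / \norm{f}_{2-\delta}$ near $\delta = 0$.

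For the \textbf{degenerate} case (showing $\delta(\Dc) = 0$): if $\supp\Dc$ is disconnected, pick a connected component and let $f = \mathbf{1}_{Y_0}$ be the indicator of the $Y$-side of that component. Then $T_{\Dc}f$ is exactly the indicator of the $X$-side $X_0$ (since from any $x \in X_0$ all the $\Dc$-mass on $y$ lands in $Y_0$, and from any $x \notin X_0$ none does). Writing $\nu = \Dc_Y(Y_0) = \Dc_X(X_0) =: \mu$ (these are equal because the component carries a well-defined amount of $\Dc$-mass, and both marginals see it entirely), we get $\norm{T_{\Dc}f}_q = \mu^{1/q}$ and $\norm{f}_p = \mu^{1/p}$. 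The inequality $\norm{T_{\Dc}f}_q \le \norm{f}_p$ with $q = 2 + \delta/(1-\delta) > 2 - \delta = p$ then reads $\mu^{1/q} \le \mu^{1/p}$, i.e.\ $\mu^{1/q - 1/p} \le 1$; since $1/q < 1/p$ and $\mu < 1$ (assuming $\Dc$ is not itself trivially a single component --- but if it were, connectedness holds), this forces... wait, $\mu < 1$ and a negative exponent gives $\mu^{\text{neg}} > 1$, violating the inequality for every $\delta > 0$. Hence no positive $\delta$ works and $\delta(\Dc) = 0$.

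For the \textbf{non-degenerate} case (showing $\delta(\Dc) > 0$): here I would argue that $T_{\Dc}$, as an operator on $L^2(\Dc_Y) \to L^2(\Dc_X)$, has the constant function as its unique (up to scalar) maximizer of $\norm{T_{\Dc}f}_2 / \norm{f}_2$, with the operator restricted to the orthocomplement of constants having norm strictly less than $1$ --- this strict spectral gap is precisely what connectedness of $\supp\Dc$ buys (it is the bipartite analogue of a connected Markov chain having a spectral gap; formally, $T_{\Dc}^* T_{\Dc}$ is a self-adjoint stochastic-type operator whose second eigenvalue is $< 1$ iff the associated graph is connected). Given a genuine $L^2$-gap $\norm{T_{\Dc}f}_2 \le (1-\eta)\norm{f}_2$ for $f \perp 1$, one upgrades to a hypercontractive statement for $\delta$ slightly positive by a standard differentiation/interpolation argument: the function $\delta \mapsto \log\norm{T_{\Dc}f}_{2+\delta/(1-\delta)} - \log\norm{f}_{2-\delta}$ vanishes at $\delta = 0$ when $f$ is constant, and for non-constant $f$ its derivative at $0$ is strictly negative because of the $L^2$-gap plus the strict monotonicity of norms (Lemma~\ref{norm-monotonicity}); a compactness argument on the sphere $\norm{f}_2 = 1$ (finite-dimensional, since $X$ and $Y$ are finite) then gives a uniform $\delta > 0$ that works for all $f$.

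The main obstacle I expect is the non-degenerate direction, specifically making the passage from the bare $L^2$-spectral-gap to a hypercontractive inequality for positive $\delta$ fully rigorous and uniform over all $f$. The derivative-at-zero computation is clean for a fixed smooth-enough $f$, but turning it into a single $\delta(\Dc) > 0$ requires either (i) a compactness argument handling the fact that the second derivative in $\delta$ could in principle be large, or (ii) invoking a known hypercontractivity-from-spectral-gap result for finite Markov kernels (in the product-space literature this is classical, cf.\ the machinery behind Theorem~\ref{tensor-product} and the Two-Point Inequality). I would lean on route (ii), citing the same source as for those results, and reserve route (i) as the self-contained fallback: on the unit $L^2$-sphere, $\log\norm{T_{\Dc}f}_{2+\delta/(1-\delta)}$ is jointly continuous in $(f,\delta)$ and its $\delta$-derivative is bounded, so the strictly-negative derivative at $\delta=0$ for every non-constant $f$ extends to a neighborhood uniformly, while constants satisfy the inequality with equality throughout; combining these yields the claim.
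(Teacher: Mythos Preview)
Your proposal is correct and matches the paper's approach: the indicator of one side of a component for the degenerate direction, and the $L^2$ spectral gap from connectedness plus compactness on the unit $L^2$-sphere (your ``route~(i)'') for the non-degenerate direction. One simplification the paper makes: it bypasses your derivative argument, observing that for non-constant $f$ the spectral gap already gives a strict inequality $\norm{T_{\Dc}f}_2 < \norm{f}_2$ at $\delta = 0$ (it is the \emph{value}, not the derivative, that the gap controls), whence plain continuity of $p \mapsto \norm{\cdot}_p$ yields $\norm{T_{\Dc}f}_{2+\eta} < \norm{f}_{2-\eta}$ for some $\eta > 0$; packaging this as a continuous function $f \mapsto \delta(f)$ on the compact sphere and taking the infimum finishes the proof.
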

\begin{proof}
    Suppose $\Dc$ is degenerate. Then there are non-trivial partitions $X = X_1 \cup X_2$, $Y = Y_1 \cup Y_2$ such that
    $\Prob{(x, y) \sim \Dc}{x \in X_1, y \in Y_2} = 0$ and $\Prob{(x, y) \sim \Dc}{x \in X_2, y \in Y_1} = 0$.
    Consider the indicator $I_{Y_1} \in \Fc_Y$ of $Y_1$ that maps $Y_1$ to~$1$ and $Y - Y_1$~--- to $0$.
    Then for every $1 \leq p \leq \infty$ the following equality holds: $\|T_{\Dc} I_{Y_1}\|_{p} = \|I_{Y_1}\|_p$.
    On the other hand by Lemma~\ref{norm-monotonicity}
    \begin{eqnarray*}
        \frac{d}{dp}\|I_{Y_1}\|_p > 0,
        \\
        \frac{d}{dp}\|T_{\Dc} I_{Y_1}\|_p > 0,
    \end{eqnarray*}
    since $\supp \Dc_X = X$, $\supp \Dc_Y = Y$, and $I_{Y_1}$ is non-constant.
    Thus, clearly, $\delta(\Dc) = 0$.
    
    Conversely, let $\Dc$ be non-degenerate.
    Conside the unit sphere $S$ in $L_2(Y)$. For $f \in S$ let us define
    $\delta(f) := \max\setst{\delta \leq 1}{\|T_{\Dc} f\|_{2 + \frac{\delta}{1 - \delta}} \leq \|f\|_{2 - \delta}}$.
    Since $\delta(\Dc) = \inf_{f \in S} \delta(f)$, $\delta(f)$ is continuous, and $S$ is compact, it remains to prove
    that $\delta(f) > 0$ for every $f \in S$.
    If $f$ is constant then, the inequality is obvious since $\|f\|_p$ is constant.
    If $f$ is not constant, then the inequality is true, since $\|T_{\Dc} f\|_2 < \|f\|_2$, so
    there exists $\eta > 0$ such that $\|T_{\Dc} f\|_{2 + \eta} < \|f\|_{2 - \eta}$ ($\|\cdot\|_p$ is continuous in $p$).
\end{proof}

\begin{theorem}
    Let $A \subseteq X$, $B \subseteq Y$. If $\Prob{x \sim \Dc_X}{x \in A} = \mu$ and
    $\Prob{y \sim \Dc_Y}{y \in B} = \nu$, then
    \begin{multline}
        \label{rectangle-bound}
        \Prob{(x, y) \sim \Dc}{x \in A, y \in B} \leq \mu \nu + \\ +
        \left(\mu^{2 - \delta(\Dc)}(1 - \mu) + \mu (1 - \mu)^{2 - \delta(\Dc)}\right)^{1 / (2 - \delta(\Dc))}.
        \left(\nu^{2 - \delta(\Dc)}(1 - \nu) + \nu (1 - \nu)^{2 - \delta(\Dc)}\right)^{1 / (2 - \delta(\Dc))}.
    \end{multline}
\end{theorem}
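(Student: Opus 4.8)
The plan is to rephrase the rectangle probability as an inner product, exploit that $T_\Dc$ fixes constant functions to ``center'' the indicators, and then combine Hölder's inequality with the hypercontractive estimate built into the definition of $\delta(\Dc)$.

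Write $\delta := \delta(\Dc)$ and equip $\Fc_X$ with the inner product $\dotp{f, g} := \Exp{x \sim \Dc_X}{f(x) g(x)}$. Let $I_A \in \Fc_X$ and $I_B \in \Fc_Y$ be the indicators of $A$ and $B$. Unwinding the definition of $T_\Dc$ and using that $\Dc_X$ is the first marginal of $\Dc$ gives
$$\Prob{(x, y) \sim \Dc}{x \in A,\ y \in B} = \Exp{(x, y) \sim \Dc}{I_A(x) I_B(y)} = \dotp{I_A, T_\Dc I_B}.$$
First I would center $I_B$. Since $T_\Dc \mathbf{1} = \mathbf{1}$, we have $T_\Dc I_B = \nu + T_\Dc(I_B - \nu)$; and a direct computation shows $\Exp{x \sim \Dc_X}{(T_\Dc g)(x)} = \Exp{y \sim \Dc_Y}{g(y)}$, so $T_\Dc(I_B - \nu)$ has mean $0$ with respect to $\Dc_X$. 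Consequently $\dotp{\mu, T_\Dc(I_B - \nu)} = 0$, so subtracting the constant $\mu$ from $I_A$ does not change the pairing, and
$$\Prob{(x, y) \sim \Dc}{x \in A,\ y \in B} = \mu \nu + \dotp{I_A - \mu,\ T_\Dc(I_B - \nu)}.$$

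It remains to bound the second term. The key bookkeeping observation is that the two exponents in the definition of $\delta(\Dc)$ are Hölder conjugates: $2 + \tfrac{\delta}{1 - \delta} = \tfrac{2 - \delta}{1 - \delta}$, whose conjugate exponent is exactly $2 - \delta$. Hence Hölder's inequality yields
$$\dotp{I_A - \mu,\ T_\Dc(I_B - \nu)} \leq \norm{I_A - \mu}_{2 - \delta} \cdot \norm{T_\Dc(I_B - \nu)}_{(2 - \delta)/(1 - \delta)},$$
and applying the defining property of $\delta(\Dc)$ to $f = I_B - \nu \in \Fc_Y$ gives $\norm{T_\Dc(I_B - \nu)}_{(2 - \delta)/(1 - \delta)} \leq \norm{I_B - \nu}_{2 - \delta}$. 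Finally I would evaluate the two remaining norms: $I_A - \mu$ equals $1 - \mu$ on a set of $\Dc_X$-measure $\mu$ and equals $-\mu$ on its complement, so $\norm{I_A - \mu}_{2 - \delta}^{2 - \delta} = \mu (1 - \mu)^{2 - \delta} + (1 - \mu)\,\mu^{2 - \delta}$, and symmetrically $\norm{I_B - \nu}_{2-\delta}^{2-\delta} = \nu(1-\nu)^{2-\delta} + (1-\nu)\,\nu^{2-\delta}$. Substituting these into the previous line produces exactly (\ref{rectangle-bound}).

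I do not expect a serious obstacle: the argument is short once the centering step is set up. The one thing to handle carefully is the exponent arithmetic (verifying the conjugacy relation above, and noting that $0 < \delta \leq 1$ keeps all the $L_p$-norms well defined; the boundary case $\delta = 1$, should it arise, can be dealt with by continuity in $p$). The conceptual content — that the two-set hypercontractivity bound reduces to single-function hypercontractivity via centering, with the non-triviality of the bound coming precisely from $\delta(\Dc) > 0$ (Theorem~\ref{delta_non_degenerate}) — is the standard mechanism, and it is what makes the estimate strictly better than the trivial $\min\set{\mu,\nu}$.
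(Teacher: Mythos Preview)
Your argument is correct and is essentially the paper's proof: write the probability as $\mu\nu + \langle I_A - \mu,\, T_\Dc(I_B - \nu)\rangle$, apply H\"older with the conjugate pair $2-\delta$ and $(2-\delta)/(1-\delta)=2+\tfrac{\delta}{1-\delta}$, and then invoke the defining hypercontractivity of $\delta(\Dc)$ before evaluating the norms of the centered indicators. If anything, your write-up is cleaner than the paper's (you verify the conjugacy of the exponents and the mean-zero/centering step explicitly, whereas the paper's displayed H\"older line has a harmless typo with $T_\Dc f$ in place of $T_\Dc g$).
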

\begin{proof}
    Let $I_A$ and $I_B$ be indicators of $A$ and $B$ respectively.
    $$
        \Prob{(x, y) \sim \Dc}{x \in A, y \in B}
        =
        \Exp{(x, y) \sim \Dc}{I_A(x) I_B(y)}
    $$
    Let us denote $f(x) := I_A(x) - \mu$, $g(x) := I_B(x) - \nu$.
    $$
        \Exp{(x, y) \sim \Dc}{I_A(x) I_B(y)} = \mu\nu + \Exp{(x, y) \sim \Dc}{f(x) g(y)} = 
        \mu \nu + \Exp{x \sim \Dc_X}{f(x) T_{\Dc} g(x)}
    $$
    By H\"older's inequality
    $$
        \mu \nu + \Exp{x \sim \Dc_X}{f(x) T_{\Dc} g(x)}
        \leq \mu \nu + \|T_{\Dc} f\|_{2 + \frac{\delta(\Dc)}{1 - \delta(\Dc)}} \|g\|_{2 - \delta(\Dc)}.
    $$
    By definition of $\delta(\Dc)$
    \begin{equation}
        \label{holder}
        \mu \nu + \|T_{\Dc} f\|_{2 + \frac{\delta(\Dc)}{1 - \delta(\Dc)}} \|g\|_{2 - \delta(\Dc)}
        \leq \mu \nu + \|f\|_{2 - \delta(\Dc)} \|g\|_{2 - \delta(\Dc)}.
    \end{equation}
    Let us recall that $f = I_A - \mu$, $g = I_B - \nu$.
    Thus,
    \begin{eqnarray}
        \label{norm_f}
        \|f\|_{2 - \delta(\Dc)} = \left(\mu (1 - \mu)^{2 - \delta(\Dc)} +
        \mu^{2 - \delta(\Dc)} (1 - \mu)\right)^{1/(2 - \delta(\Dc))},
        \\
        \label{norm_g}
        \|g\|_{2 - \delta(\Dc)} = \left(\nu (1 - \nu)^{2 - \delta(\Dc)} +
        \nu^{2 - \delta(\Dc)} (1 - \nu)\right)^{1/(2 - \delta(\Dc))}.
    \end{eqnarray}
    Plugging (\ref{norm_f}), (\ref{norm_g}) into (\ref{holder}) we obtain (\ref{rectangle-bound}).
\end{proof}
\end{document}